\documentclass[12pt,oneside,reqno]{article}
\makeatletter
\newcommand{\singlespacing}{\let\CS=\@currsize\renewcommand{\baselinestretch}{1}\tiny\CS}
 \usepackage{a4wide}
 \usepackage[a4paper, total={6.4in, 9.5in}]{geometry}
\usepackage{epsfig}
\usepackage{lineno}
\usepackage{rotating}
\usepackage{booktabs}
\usepackage{wrapfig}
\usepackage{comment}
\usepackage{multirow}
\usepackage{siunitx}
\usepackage{rotating}
\usepackage[T1]{fontenc}
\usepackage[utf8]{inputenc}
\usepackage{amsmath}
\usepackage{amsthm}

\usepackage{pdfpages}
\usepackage{amsfonts,graphicx}
\usepackage{authblk}
\usepackage{verbatim}
\usepackage[mathscr]{eucal}
\usepackage{booktabs}
\usepackage{siunitx}

\usepackage{mathptmx} 
\usepackage{adjustbox}
\usepackage{xcolor}
\usepackage{xspace}
\usepackage{enumitem}
\usepackage{subfig}
\usepackage[colorlinks=true,urlcolor=blue,citecolor=blue,linkcolor=blue,bookmarks=true]{hyperref}

\usepackage{cleveref}
\crefname{appendix}{Appendix}{Appendices}
\usepackage[numbers,sort&compress]{natbib}
\usepackage{float}
\usepackage{epstopdf}
\usepackage[parfill]{parskip}

\newcommand{\RN}[1]{%
	\textup{\uppercase\expandafter{\romannumeral#1}}%
}

\newcommand{\be}{\begin{equation}}
\newcommand{\ee}{\end{equation}}
\newcommand{\beanno}{\begin{eqnarray*}}

\newcommand{\eeanno}{\end{eqnarray*}}
\newcommand{\bea}{\begin{eqnarray}}
\newcommand{\eea}{\end{eqnarray}}
\newcommand{\ba}{\begin{array}}
\newcommand{\ea}{\end{array}}

\newcommand{\bc}{\begin{center}}
\newcommand{\ec}{\end{center}}

\newcommand{\ncom}{\newcommand}

\ncom{\half}{\frac{1}{2}}
\ncom{\mbx}{\makebox{.25cm}}
\ncom{\hs}{\mbox{\hspace{.25cm}}}
\ncom{\rar}{\rightarrow}
\ncom{\Rar}{\Rightarrow}
\ncom{\noin}{\noindent}
\ncom{\sz}{\scriptsize}
\ncom{\rf}{\ref}
\ncom{\s}{\sqrt{2}}
\ncom{\sgm}{\sigma}
\ncom{\Sgm}{\Sigma}
\ncom{\psgm}{\sigma^{\prime}}
\ncom{\dt}{\delta}
\ncom{\Dt}{\Delta}
\ncom{\lmd}{\la}
\ncom{\Lmd}{\la}
\ncom{\Th}{\Theta}
\ncom{\e}{\xi}
\ncom{\eps}{\epsilon}
\ncom{\pcc}{\stackrel{P}{>}}
\ncom{\lp}{\stackrel{L_{p}}{>}}
\ncom{\dist}{{\rm\,dist}}
\ncom{\sspan}{{\rm\,span}}
\ncom{\re}{{\rm Re\,}}
\ncom{\im}{{\rm Im\,}}
\ncom{\sgn}{{\rm sgn\,}}
\clearpage
\ncom{\hone}{\mbox{\hspace{1em}}}
\ncom{\htwo}{\mbox{\hspace{2em}}}
\ncom{\hthree}{\mbox{\hspace{3em}}}
\ncom{\hfour}{\mbox{\hspace{4em}}}
\ncom{\vone}{\vskip 2ex}
\ncom{\vtwo}{\vskip 4ex}
\ncom{\vonee}{\vskip 1.5ex}
\ncom{\vthree}{\vskip 6ex}
\ncom{\vfour}{\vspace*{8ex}}
\ncom{\norm}{\|\;\;\|}
\ncom{\integ}[4]{\int_{#1}^{#2}\,{#3}\,d{#4}}
\ncom{\vspan}[1]{{{\rm\,span}\{ #1 \}}}
\ncom{\dm}[1]{ {\displaystyle{#1} } }
\ncom{\ri}[1]{{#1} \index{#1}}
\newtheorem{theorem}{\bf Theorem}[section]

\newtheorem{corollary}{Corollary}[section]

\newtheoremstyle
{remarkstyle}
{}
{11pt}
{}
{}
{\bfseries}
{:}
{     }
{\thmname{#1} \thmnumber{#2} }
\theoremstyle{remarkstyle}

\renewcommand{\baselinestretch}{1.5}
\def \x {x}
\def \X {X}
\def \y {y}

\def \a {a}
\def \b {b}
\def \th {\theta}

\def \f {f}
\def \F {F}
\def \ss {n}
\def \i {i}
\def \fx {\phi(\x;\th)}
\def \fxi {\phi(\x_i; \th)}
\def \fy {\phi(\y;\th)}

\def \pv {\Theta}
\def \fm {\phi(\mu;\theta)}
\def \fmt {\phi(\mu_{i};\theta)}
\def \fy {\phi(\y;\theta)}
\def \fyt {\phi(\y_i;\theta)}

\allowdisplaybreaks
\begin{document}
	\title{\bf Kumaraswamy–Teissier Distribution: Bayesian and Non-Bayesian Inference, Regression Modeling and Climate Application}
	\author{ Kamana Mishra$^1$, Neeraj Poonia$^2$, Tanmay Kayal$^1$ and Sarita Azad$^1$ \thanks{Corresponding Author Email: sarita@iitmandi.ac.in}}
	\affil{\it $^1$School of Mathematical and Statistical Sciences \\ \it Indian Institute of Technology Mandi, Kamand, Himachal Pradesh-175005, India  \\ $^2$ Cytel Innovation Team, Cytel Pune, Maharashtra-411038, India}
	\date{}
	\maketitle
	\vspace{-1.5em}
	\begin{abstract}
		\noindent This study introduces the three-parameter Kumaraswamy--Teissier distribution (KTD) by applying the Kumaraswamy-$G$ construction to the Teissier distribution. The proposed distribution provides additional shape flexibility for modelling positive-valued observations and accommodates different density and hazard-rate shapes. Several statistical properties, including moments, the moment generating function, quantile function, and order statistics, are derived. Maximum likelihood, maximum product spacing, and Bayesian methods are considered for parameter estimation, and simulation results demonstrate improving estimation accuracy with increasing sample size. A median-based regression model is further developed for the KTD, with its finite-sample performance examined through simulation. The practical utility of the proposed regression model is demonstrated through an application to monthly gridded rainfall data from the Northwest Himalayas, incorporating annual and semi-annual harmonic terms to capture seasonal variation. Its performance is assessed against Kumaraswamy and beta regression using residual-based goodness-of-fit diagnostics and likelihood-based model selection. The results demonstrate effectiveness of the proposed distribution for modelling heterogeneous and skewed positive-valued rainfall data.
	\end{abstract}

	\noindent {\bf Keywords}: {\it Kumaraswamy-Teissier distribution, Maximum likelihood estimation, Maximum product spacing, Bayesian analysis, KTD Regression model, Northwest Himalayan rainfall}

\section{Introduction} \label{intro}
Statistical modeling of positive-valued observations exhibiting pronounced asymmetry, heterogeneity, and non-normality is important in environmental, hydrological, and reliability studies. For data defined on the unit interval \((0,1)\), the beta distribution is widely used because of its flexibility. However, its cumulative distribution function (CDF) involves the incomplete beta function, which can make analytical calculations and simulation less tractable. To address this limitation, Kumaraswamy \cite{kumaraswamy1980generalized} introduced a two-parameter distribution for double-bounded random processes, particularly with applications in hydrology. The Kumaraswamy (Kw) distribution has density and distribution functions
\[f(x)=abx^{a-1}(1-x^{a})^{b-1}\quad \text{and}\quad F(x)=1-(1-x^{a})^{b}\]
where \(a,b>0\) and \(0<x<1\).

Jones \cite{jones2009kumaraswamy} examined the properties of the Kw distribution and highlighted its similarity to the beta distribution together with its greater analytical tractability. In hydrology and related fields, the Kw distribution has been widely applied \cite{ganji2006grain, fletcher1996estimation, sundar1989application}. Nadarajah \cite{nadarajah2008distribution} noted its extensive use in hydrological studies and its role as a tractable alternative to the beta distribution; see also Koutsoyiannis and Xanthopoulos \cite{koutsoyiannis1989parametric}. The flexibility of a probability model can be further enhanced by introducing additional parameters through suitable distribution-generating constructions. Such extensions have been widely explored for modelling complex and skewed data \cite{cordeiro2012gumbel, cordeiro2012kumaraswamy, ogunde2025half, dhungana2025generalized}. For positive-valued observations, the Teissier distribution \cite{teissier1934recherches} provides a suitable baseline with CDF given by:
	\begin{eqnarray}\label{td}
	G(\x) = 1-e^{\fx}; \qquad \th>0,  \x>0.
\end{eqnarray}
where, $\fx =  \th \x - e^{\th \x} +1 $. 

Several extensions of the Teissier distribution have been proposed to increase its flexibility for modelling asymmetric environmental data, including the Exponentiated Teissier distribution (ETD) by Sharma et al. \cite{sharma2022exponentiated}, followed by the Lehmann type-II Teissier distribution \cite{kumaran2023lehmann} and the Alpha Power ETD \cite{poonia2022alpha}, the latter being used for positively skewed, leptokurtic, and heavy-tailed annual maximum rainfall. Teissier-based models have also been considered for flood levels \cite{krishnan2022assessment} and seasonal rainfall through univariate and bivariate extensions \cite{thakur2022uni}. To further advance these developments and accommodate positive-valued data, the Kw-(G) framework can be combined with the Teissier distribution as the baseline, leading to the proposed three-parameter Kumaraswamy--Teissier distribution (KTD). This construction is distinct from the recently introduced Odd Teissier Kumaraswamy (OTKw) distribution of Nzei et al. \cite{nzei2026odd}, which uses the Teissier distribution as the generator and the Kumaraswamy distribution as the baseline within an odd-ratio T-X framework and restricted to \((0,1)\). 

Beyond distributional flexibility, incorporating covariate information into distributional parameters provides a natural framework for modelling systematic variation in data. Several parametric regression models have been developed in this direction; for example, Ferrari and Cribari-Neto \cite{ferrari2004beta} introduced beta regression by linking the mean to explanatory variables, while Mitnik and Baek \cite{mitnik2013kumaraswamy} developed a median-dispersion parameterization of the Kumaraswamy distribution for regression modelling. Related developments include \cite{bhati2024new, el2025unit, korkmaz2021unit}. For seasonal data, distributional regression can further incorporate systematic variation through time-dependent covariates. Perreault et al. \cite{perreault2025distributional} proposed a seasonal distributional regression framework in which distributional characteristics vary with the time of year through Fourier terms, without explicitly specifying the temporal dependence of the underlying process. The need for such a flexible regression framework is particularly relevant to rainfall over the Northwest Himalayas (NWH), where complex topography and atmospheric processes lead to strong spatial and temporal variability \cite{sharma2023hydrological, pawar2015role}. Rainfall also exhibits pronounced skewness, heavy tails, and localized extremes \cite{gouda2023extreme, jena2021novel, sana2024preliminary}. These characteristics motivate a flexible distributional model that can accommodate positive, non-normal rainfall while accounting for seasonal variation through covariates \cite{mishra2026spatiotemporal}. Motivated by these developments, we formulate a median-based regression model for the KTD, allowing its conditional median to vary with explanatory variables while retaining the flexibility of the proposed distribution. 

The main contributions of this work are threefold. First, we construct the Kumaraswamy--Teissier distribution by applying the Kumaraswamy-\(G\) framework to the Teissier distribution in Section~\ref{model} and derive its principal statistical properties in Section~\ref{properties}. The maximum likelihood, maximum product spacing, and Bayesian estimation procedures are presented in Section~\ref{estimation}, with their finite-sample performance investigated through simulation in Section~\ref{simulation}. Second, we develop a median-based regression model for the KTD in Section~\ref{reg}, together with its likelihood inference and diagnostic framework, and evaluate the finite-sample performance of the regression estimators through simulation. Third, the practical applicability of the proposed regression model is demonstrated in Section~\ref{data} through the analysis of monthly gridded rainfall over the Northwest Himalayas, where its performance is compared with Kumaraswamy and beta regression under a common seasonal regression structure. Section~\ref{conclusion} summarizes the main findings and conclusions of the study.

\section{Kumaraswamy Teissier distribution (KTD)} \label{model}
Let $G(\x)$ be an arbitrary baseline CDF, then the CDF of the Kw-G distribution is given by
	\begin{eqnarray}\label{kum}
		\F(\x) = 1-\left\{ 1-G(\x)^\a \right\}^\b;  \qquad \a,\b>0.
	\end{eqnarray}
By using Teissier as the baseline CDF given in equation (\ref{td}), we get the CDF of KTD as below
	\begin{eqnarray}\label{ktcdf}
		\F(\x) = 1-\left( 1-\left( 1-e^{\fx} \right)^\a \right)^\b, \qquad \x>0
	\end{eqnarray}
	where, $\a,\b,\th>0$. And the corresponding PDF is
	\begin{eqnarray}\label{ktpdf}
		\f(\x) = \a \b \th (e^{\th \x} -1) e^{\fx} \left(1-e^{\fx} \right)^{\a-1} \left(  1-\left(1-e^{\fx} \right)^{\a} \right)^{\b-1}.
	\end{eqnarray}
	Figure \ref{Figure 10} illustrates both the PDF and hazard rate function (HRF) plots for the KTD.\\
	Following are the sub-models of KTD
	\begin{enumerate}
		\item If $\b=1$ in Equation (\ref{ktpdf}) we get
		\begin{eqnarray*}
			\f(\x) = \a \th (e^{\th \x} -1) e^{\fx} \left(1-e^{\fx} \right)^{\a-1},
		\end{eqnarray*}
		which is PDF of ETD with $\a>0$ shape and $\th>0$ scale parameters.
		
		\item If $\a=\b=1$, Equation (\ref{ktpdf}) reduced to the PDF of TD as
		\begin{eqnarray*}
			\f(\x) = \th (e^{\th \x} -1) e^{\fx}.
		\end{eqnarray*}
	\end{enumerate}
	\begin{corollary} Suppose $\X \sim KTD(\pv)$. Then, the survival function $S(\x)$, HRF $(h(\x))$ and reverse HRF $(r(\x))$ of $\X$ are given by
		\begin{eqnarray*} \label{survival}
			S(\x) = \left(1-\left(1-e^{\fx}\right)^a \right)^b,
		\end{eqnarray*}
		\begin{eqnarray*} \label{hrf}
			h(\x) = \frac{ab \th (e^{\th \x}-1) e^{\fx} \left(1-e^{\fx}\right)^{a-1}}{1-\left(1-e^{\fx}\right)^a},
		\end{eqnarray*}
		and
		\begin{eqnarray*} \label{reverse hrf}
			r(\x) = \frac{ab \th (e^{\th \x}-1)e^{\fx} \left(1-e^{\fx}\right)^{a-1} \left(1-\left(1-e^{\fx}\right)^{a}\right)^{b-1}}{1-\left(1-\left(1-e^{\fx}\right)^a\right)^b}
		\end{eqnarray*}
		respectively. 
	\end{corollary}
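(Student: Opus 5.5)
The three formulas follow directly from the definitions $S(\x)=1-\F(\x)$, $h(\x)=\f(\x)/S(\x)$ and $r(\x)=\f(\x)/\F(\x)$, using the CDF (\ref{ktcdf}) and PDF (\ref{ktpdf}) of the KTD. The plan is to confirm that (\ref{ktpdf}) really is the derivative of (\ref{ktcdf}) and then substitute.

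First, I would check the derivative of the baseline exponent. Since $\fx=\th\x-e^{\th\x}+1$, we have $\frac{d}{d\x}\fx=\th(1-e^{\th\x})$. Hence the Teissier density is
\[
g(\x)=\frac{d}{d\x}\bigl(1-e^{\fx}\bigr)=\th(e^{\th\x}-1)e^{\fx},
\]
which is positive for $\x>0$. Applying the chain rule to (\ref{kum}) then gives
\[
\f=\a\b\, g\, G^{\a-1}(1-G^\a)^{\b-1},
\]
which reproduces (\ref{ktpdf}) once $G=1-e^{\fx}$ is substituted.

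Next come the three quantities.

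\emph{Survival function.} Subtracting (\ref{ktcdf}) from one gives $S(\x)=\bigl(1-(1-e^{\fx})^\a\bigr)^\b$ immediately.

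\emph{Hazard rate.} Dividing (\ref{ktpdf}) by $S(\x)$, the factor $\bigl(1-(1-e^{\fx})^\a\bigr)^{\b-1}$ cancels against $\bigl(1-(1-e^{\fx})^\a\bigr)^{\b}$. This leaves a single power of $1-(1-e^{\fx})^\a$ in the denominator, which is the stated $h(\x)$.

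\emph{Reverse hazard rate.} Dividing (\ref{ktpdf}) by $\F(\x)$ from (\ref{ktcdf}) involves no simplification, and gives the stated $r(\x)$ directly.

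There is no real obstacle. The only points to verify are the sign in the derivative of $\fx$, which makes $g>0$, and that the denominators are nonzero. For $\x>0$ we have $0<e^{\fx}<1$, because $\fx<0$ (by $e^{u}>1+u$ for $u>0$). So $0<G<1$, and therefore $0<S<1$ and $0<\F<1$.
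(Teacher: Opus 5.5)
Your proposal is correct and takes the same route as the paper: the paper gives no separate proof and treats the three formulas as immediate from $S=1-F$, $h=f/S$ and $r=f/F$ applied to (\ref{ktcdf}) and (\ref{ktpdf}), exactly as you do. Your extra checks go slightly beyond what the paper states. You verify that (\ref{ktpdf}) is the derivative of (\ref{ktcdf}), and that $0<e^{\phi(x;\theta)}<1$ for $x>0$, so that $0<F<1$ and all denominators are nonzero.
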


	\begin{figure}[!ht]
		\centering
		\subfloat[]{\includegraphics[width=0.49\textwidth]{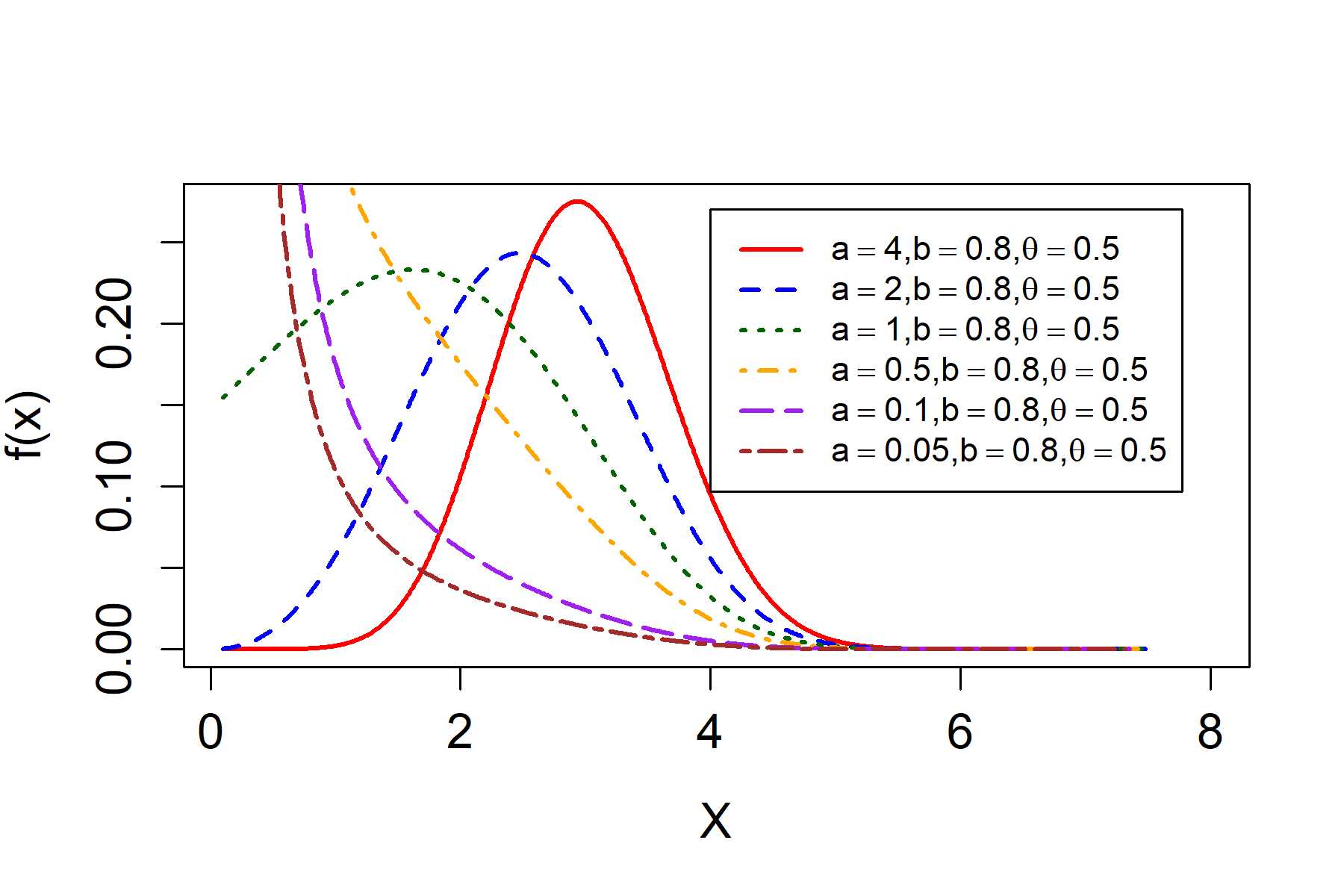}\label{fig1:f1}}
		\hfill
		\subfloat[]{\includegraphics[width=0.49\textwidth]{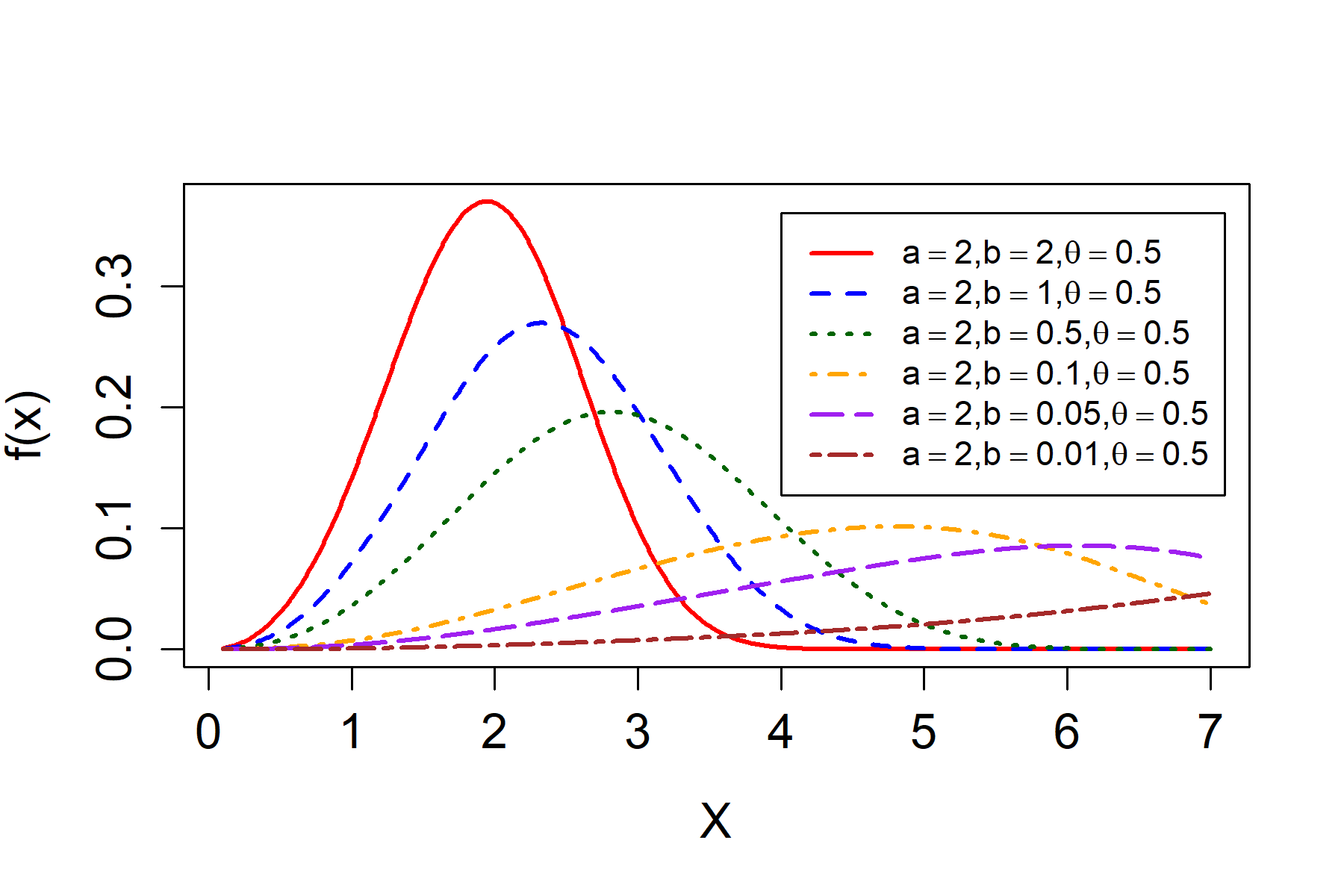}\label{fig1:f2}}
		\hfill
		\subfloat[]{\includegraphics[width=0.49\textwidth]{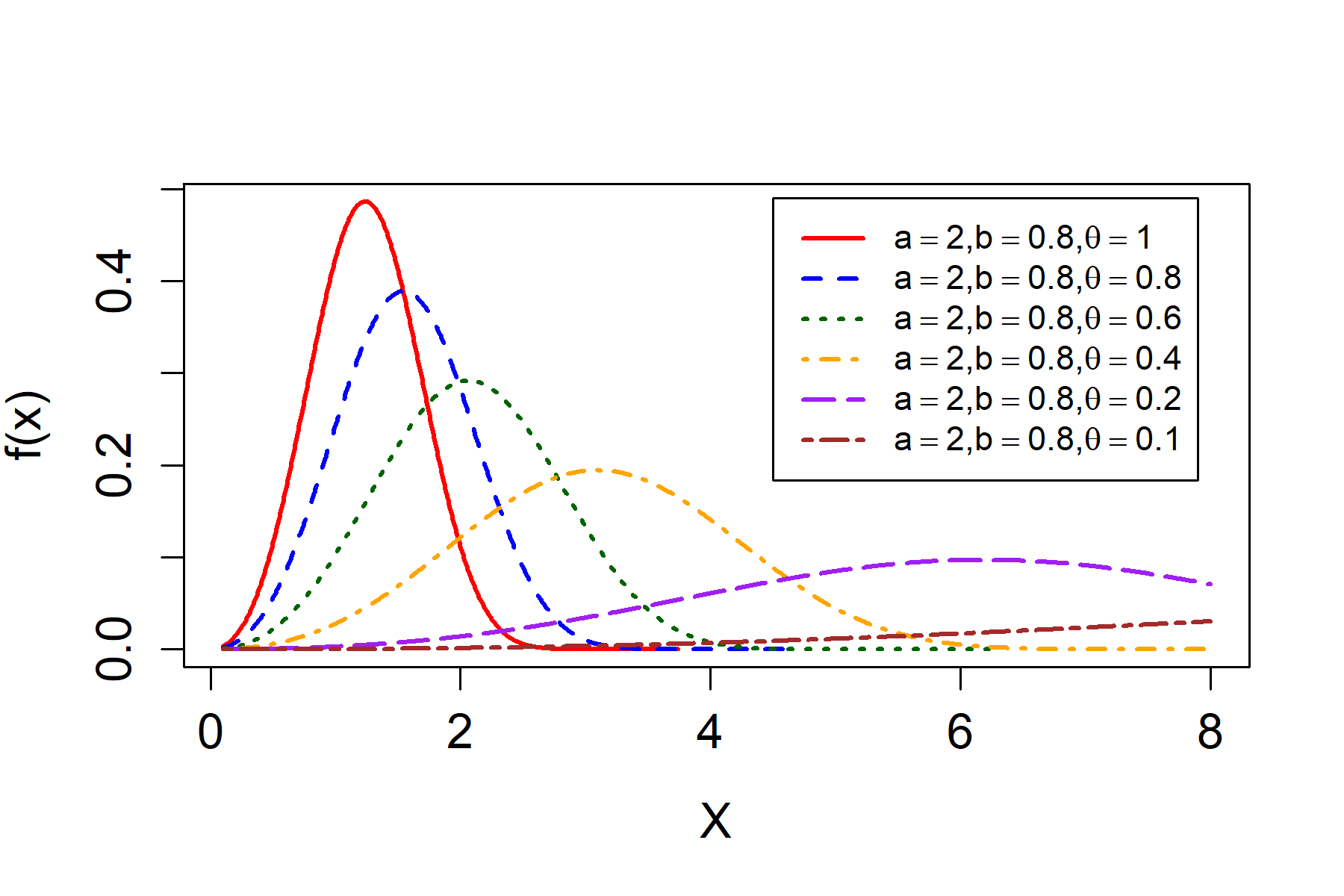}\label{fig1:f3}}
		\hfill
		\subfloat[]{\includegraphics[width=0.49\textwidth]{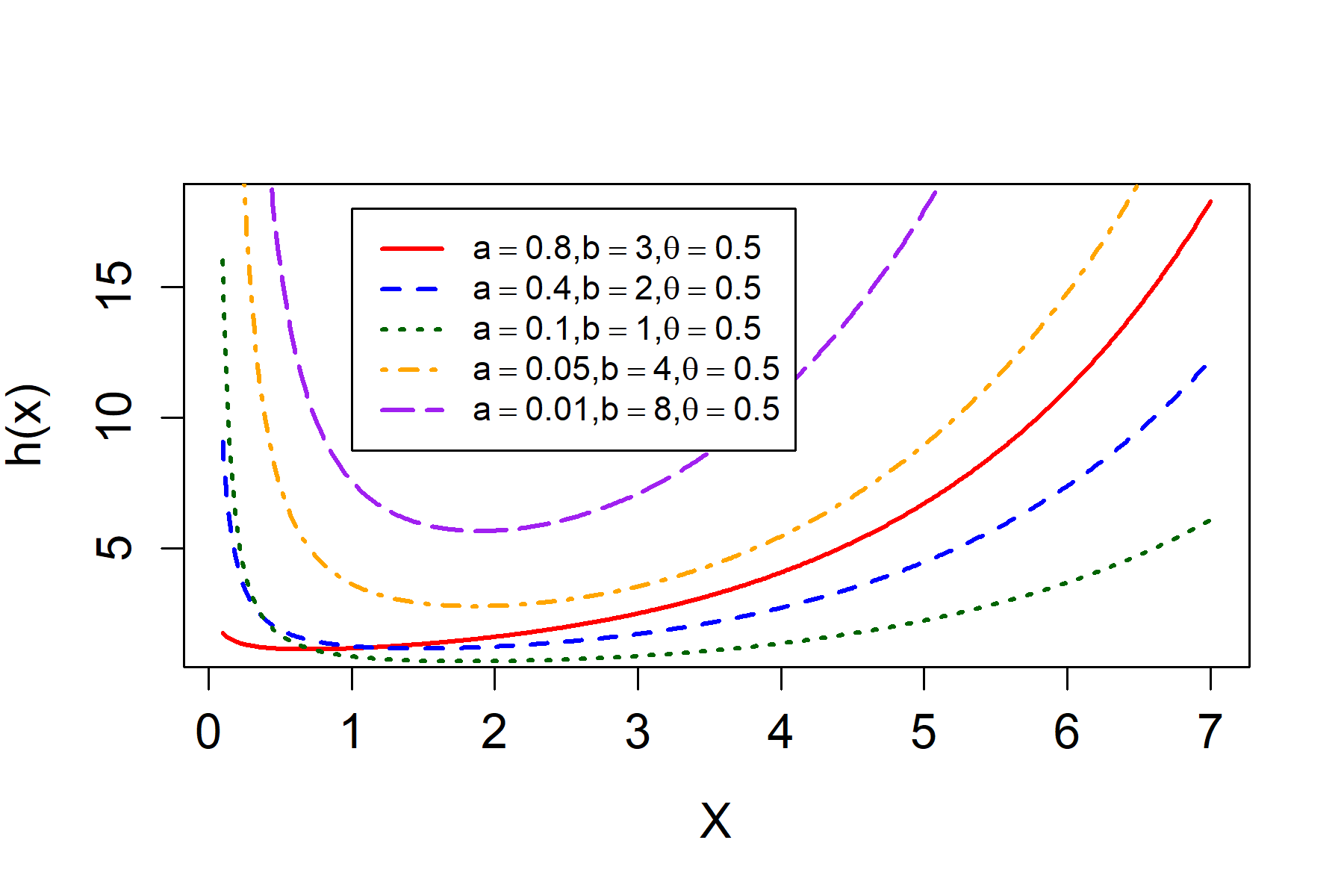}\label{fig1:f4}}
		\hfill
		\subfloat[]{\includegraphics[width=0.49\textwidth]{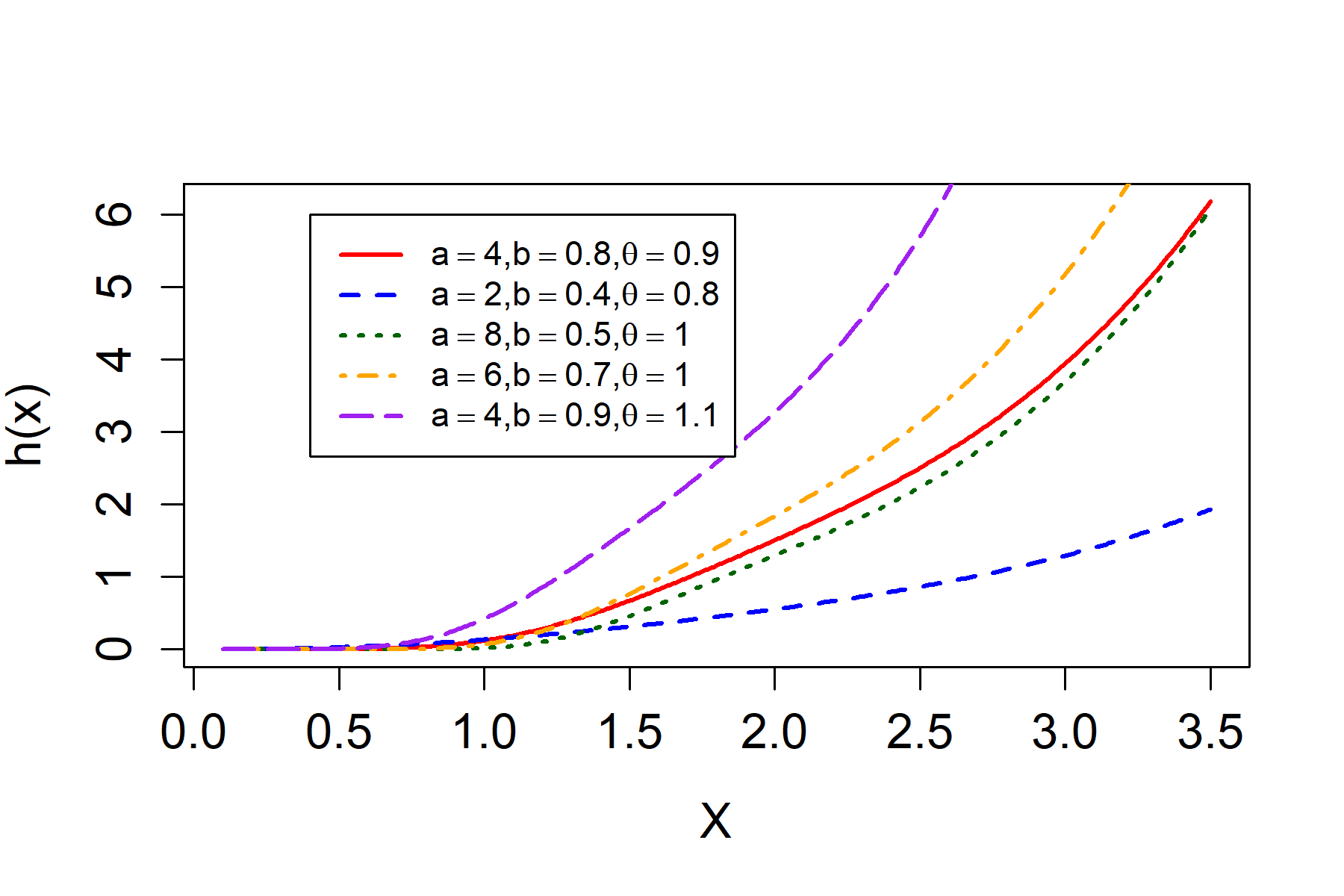}\label{fig1:f5}}
		\hfill
		\subfloat[]{\includegraphics[width=0.49\textwidth]{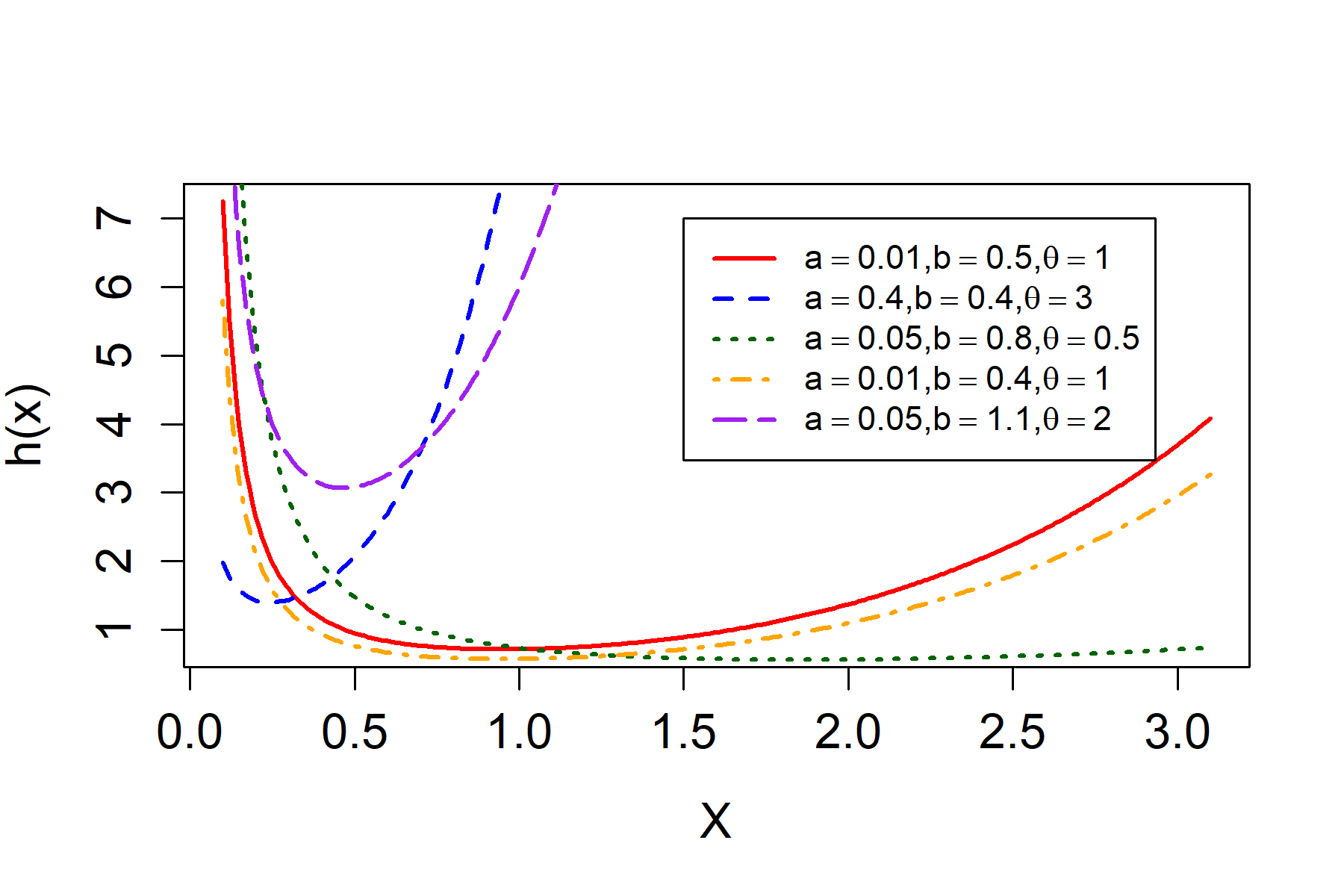}\label{fig1:f6}}
		\hfill
		\caption{Figures (a), (b), and (c) exhibits the density plots of the proposed model at different parameter values; Figure (d) exhibits the hazard rates of the proposed model for \a <1 and \b >1; Figure (e) exhibits hazard rate for \a >1 and \b <1; Figure (f) exhibits hazard rate of the model at varying values of the parameters.}
		\label{Figure 10}
	\end{figure}
	
	\subsection{Expansion for the CDF}
	For any \b > 0, we can give series expansion of (\ref{ktcdf}) by using binomial theorem:
	\begin{eqnarray} \label{exapansioncdf}
		\F(\x) &= 1- \sum_{j=0}^{\infty} t_j \left(1-e^{\fx} \right)^{\a j}
	\end{eqnarray}
	where $t_j = (-1)^j \binom{b}{j}$ and $\sum_{j=0}^{\infty} t_j = 1$. If $\b \in \mathbb{Z}$, then the sum stops at b. 
	
	\subsection{Expansion for the PDF}
	\begin{align*}
		\f(\x) &= \a \b \th (e^{\th \x} -1) e^{\fx} \left(1-e^{\fx} \right)^{\a-1} \left(  1-\left(1-e^{\fx} \right)^{\a} \right)^{\b-1}\\
		\text{Taking}~~~ g(\x)= \frac{d}{dx}G(\x),\\
		\f(\x)&= \a \b g(\x) G(\x)^{\a -1} \sum_{j=0}^{\infty} (-1)^j \binom{b-1}{j}G(\x)^{\a j}\\
		&= g(\x) \sum_{j=0}^{\infty} t_j G(\x)^{\a (j+1)-1}\\
	\end{align*}
	where $t_j = \a \b (-1)^j \binom{b-1}{j}$ and $\sum_{j=0}^{\infty} t_j = 1$. If $\b \in \mathbb{Z}$, then the sum stops at \b -1. If a is a real non-integer then, expansion can be further expanded as:
	\begin{align*}
		\f(\x) &= g(\x) \sum_{i,j=0}^{\infty}\sum_{k=0}^{i} \eta_{i,j,k} G(\x)^{k}
	\end{align*}
	where $\eta_{i,j,k} = \a \b (-1)^{i+j+k} \binom{b-1}{j} \binom{\a (j+1)-1}{i} \binom{i}{k}$ and $\sum_{i,j=0}^{\infty}\sum_{k=0}^{i} \eta_{i,j,k} = 1$. The probability density function can be expressed as (\ref{exapnsionpdf}) given below:
	\begin{eqnarray} \label{exapnsionpdf}
		\f(\x) &= \th (e^{\th \x} -1) e^{\fx} \sum_{i,j=0}^{\infty}\sum_{k=0}^{i} \eta_{i,j,k} \left(1-e^{\fx} \right)^{k}
	\end{eqnarray}

	\section{Statistical Properties} \label{properties}
	\begin{theorem}
		Let $\X \sim KTD(\pv)$. Then the $r^{th}$ order moment is given by 
		\begin{eqnarray*}
			E(X^r) = \sum_{i,j=0}^\infty \sum_{k=0}^i \sum_{l=0}^k \eta_{i,j,k} \binom{k}{l} (-1)^l e^{l+1} \frac{1}{\theta^r} \left[ J(l+2, l+1) - J(l+1, l+1) \right].  
		\end{eqnarray*}
	\end{theorem}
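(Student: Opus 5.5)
The plan is to integrate $x^r$ term by term against the series representation (\ref{exapnsionpdf}) of the density, and then reduce each resulting integral by the substitution $u=e^{\theta x}$. Starting from
\[
E(X^r)=\int_0^\infty x^r\,\theta(e^{\theta x}-1)e^{\phi(x;\theta)}\sum_{i,j=0}^{\infty}\sum_{k=0}^{i}\eta_{i,j,k}\left(1-e^{\phi(x;\theta)}\right)^{k}dx ,
\]
I will first pull the sums outside the integral. Next, since $k$ is a nonnegative integer, the finite binomial theorem gives
\[
\left(1-e^{\phi(x;\theta)}\right)^k=\sum_{l=0}^k\binom{k}{l}(-1)^l e^{l\phi(x;\theta)} .
\]
This leaves, for each index $(i,j,k,l)$, the elementary integral
\[
I_l=\int_0^\infty x^r\,\theta(e^{\theta x}-1)\,e^{(l+1)\phi(x;\theta)}\,dx .
\]

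To evaluate $I_l$, I use $\phi(x;\theta)=\theta x-e^{\theta x}+1$, so that
\[
e^{(l+1)\phi(x;\theta)}=e^{l+1}\,e^{(l+1)\theta x}\,e^{-(l+1)e^{\theta x}} .
\]
Then I substitute $u=e^{\theta x}$, so $x=\theta^{-1}\log u$, $dx=du/(\theta u)$, and the range $x\in(0,\infty)$ becomes $u\in(1,\infty)$. The factor $\theta$ cancels and $x^r$ produces $\theta^{-r}(\log u)^r$, giving
\[
I_l=\frac{e^{l+1}}{\theta^r}\int_1^\infty(\log u)^r\,(u-1)\,u^{l}\,e^{-(l+1)u}\,du .
\]
I will define
\[
J(p,q)=\int_1^\infty(\log u)^r u^{p-1}e^{-qu}\,du ,
\]
which is finite for $q>0$ because of the exponential decay. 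Splitting $(u-1)u^l=u^{l+1}-u^{l}$ then gives
\[
I_l=e^{l+1}\theta^{-r}\left[J(l+2,l+1)-J(l+1,l+1)\right].
\]
Reassembling the sums with the coefficients $\eta_{i,j,k}\binom{k}{l}(-1)^l$ yields exactly the stated formula.

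The main obstacle is justifying the interchange of the infinite sums over $i,j$ with the integral, since the coefficients $\eta_{i,j,k}$ alternate in sign. I would argue as follows.

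1. Before the inner expansion in $i,k$, the series $\sum_j(-1)^j\binom{b-1}{j}G^{a j}$ converges absolutely for $G(x)\in(0,1)$.
2. The expansion of $G^{a(j+1)-1}=(1-(1-G))^{a(j+1)-1}$ in powers of $1-G=e^{\phi}$ likewise converges absolutely, because $e^{\phi}<1$ for $x>0$.
3. Bounding each term by its absolute value gives a dominating series whose integral against $x^r g(x)$ is controlled by the finite moments of the Teissier baseline.
4. Fubini--Tonelli, or dominated convergence applied to partial sums, then permits termwise integration.

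If full absolute convergence is delicate near $x\to0$ (where $G\to0$ and $G^{a(j+1)-1}$ may blow up when $a<1$), I would instead cite the same formal interchange used to obtain (\ref{exapnsionpdf}), as is standard for Kw-$G$ expansions. The remaining steps are routine algebra.
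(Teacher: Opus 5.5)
Your computation is the paper's Appendix~B proof step for step: termwise integration against \eqref{exapnsionpdf}, the finite binomial expansion of $(1-e^{\phi(x;\theta)})^k$, the substitution $u=e^{\theta x}$, and the split $(u-1)u^{l}=u^{l+1}-u^{l}$, which gives $J(l+2,l+1)-J(l+1,l+1)$ with the same $J$. The paper also rewrites $J(p,q)$ as the $r$-th $\nu$-derivative at $\nu=0$ of $q^{-\nu-p}\Gamma(\nu+p,q)$, which the statement does not need, and it gives no justification for integrating termwise. Your added discussion of the interchange therefore goes beyond the paper, but its step~3 is false as stated.

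There is no integrable dominating series made of the absolute values of all the terms. Inside the $k$-sum this is immediate: $\sum_{k}\binom{i}{k}G^{k}=(1+G)^{i}$, and $\bigl|\binom{a(j+1)-1}{i}\bigr|$ decays only polynomially in $i$, so the $i$-series of absolute values diverges pointwise. The problem persists even after collapsing the finite $k$- and $l$-sums. Write $\alpha_j=a(j+1)-1$. The triangle inequality gives $\sum_i\bigl|\binom{\alpha_j}{i}\bigr|(1-G)^{i}\ge(2-G)^{\alpha_j}$, and the integral of this against $x^{r}g(x)$ grows like $2^{aj}$ up to polynomial factors. Meanwhile $\bigl|\binom{b-1}{j}\bigr|$ decays only like $j^{-b}$ for $b\notin\mathbb{Z}$, so the joint Tonelli bound over $(i,j)$ diverges. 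The obstruction comes from $x\to0$, where $1-G\to1$, not from the integrable factor $G^{a-1}$ you point to.

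The interchange has to be done in two stages.
\begin{itemize}
\item For fixed $j$, dominate the $i$-series by $x^{r}g(x)\sum_i\bigl|\binom{\alpha_j}{i}\bigr|(1-G)^{i}$. When $\alpha_j\ge0$ this is at most $x^{r}g(x)\sum_i\bigl|\binom{\alpha_j}{i}\bigr|<\infty$. When $-1<\alpha_j<0$ all coefficients $(-1)^i\binom{\alpha_j}{i}$ are positive, so it equals $x^{r}g(x)G^{\alpha_j}$. Both bounds are integrable.
\item Then dominate the $j$-series by $x^{r}g(x)G^{a-1}\sum_j\bigl|\binom{b-1}{j}\bigr|G^{aj}$. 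This equals $x^{r}f(x)/(ab)$ when $b<1$, and is at most a constant times $x^{r}g(x)G^{a-1}$ when $b>1$.
\end{itemize}
This proves the formula as an iterated series with the $j$-sum outermost and the finite $k$, $l$ sums innermost.

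The order is not cosmetic. With the $i$-sum outermost and $b\notin\mathbb{Z}$, the inner series $\sum_j(-1)^j\binom{b-1}{j}\binom{a(j+1)-1}{i}$ has terms of eventually constant sign and size of order $j^{\,i-b}$. It therefore diverges for every $i\ge b-1$. So the order hidden in the notation $\sum_{i,j=0}^\infty$ should be stated explicitly.
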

	\noindent The demonstration of the \(r^{th}\) order moment is provided in \cref{appendixB}, where we employ the definition of moments and then integrate the expanded form of the PDF.
	
	\begin{theorem}
		Let $\X \sim KTD(\pv)$. Then the mgf of $\X$ is given by
		\begin{eqnarray*}
			M_X(t) &= \sum_{i,j=0}^\infty \sum_{k=0}^i \sum_{l=0}^k \eta_{i,j,k} \binom{k}{l} (-1)^l e^{l+1} \Bigg[ (l+1)^{-(\frac{t}{\theta} + l + 2)} \Gamma\!\left( \frac{t}{\theta} + l + 2, \, l+1 \right) \nonumber \\
			&\qquad - (l+1)^{-(\frac{t}{\theta} + l + 1)} \Gamma\!\left( \frac{t}{\theta} + l + 1, \, l+1 \right) \Bigg].
		\end{eqnarray*}
	\end{theorem}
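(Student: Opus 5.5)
The plan is to compute $M_X(t)=E(e^{tX})=\int_0^\infty e^{tx}f(x)\,dx$ by substituting the series form (\ref{exapnsionpdf}) of the density and integrating term by term, mirroring the argument for the $r^{th}$ moment. After interchanging sum and integral, each term has the form
\begin{eqnarray*}
\eta_{i,j,k}\int_0^\infty e^{tx}\,\theta(e^{\theta x}-1)e^{\phi(x;\theta)}\left(1-e^{\phi(x;\theta)}\right)^{k}dx .
\end{eqnarray*}
Since $k$ is a nonnegative integer, I expand $(1-e^{\phi})^k=\sum_{l=0}^k\binom{k}{l}(-1)^l e^{l\phi}$ by the finite binomial theorem. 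This gives $e^{(l+1)\phi(x;\theta)}=e^{l+1}e^{(l+1)\theta x}e^{-(l+1)e^{\theta x}}$, which produces the factor $\binom{k}{l}(-1)^l e^{l+1}$ in the statement.

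The key step is the substitution $u=e^{\theta x}$, so that $x=\log u/\theta$, $dx=du/(\theta u)$, $e^{tx}=u^{t/\theta}$, and $u$ runs over $(1,\infty)$. The remaining integral is
\begin{eqnarray*}
\int_1^\infty u^{t/\theta}(u-1)u^{l+1}e^{-(l+1)u}\,\frac{du}{u}
=\int_1^\infty u^{\frac{t}{\theta}+l+1}e^{-(l+1)u}\,du-\int_1^\infty u^{\frac{t}{\theta}+l}e^{-(l+1)u}\,du .
\end{eqnarray*}
Next I rescale with $v=(l+1)u$. Each piece becomes an upper incomplete gamma function $\int_{l+1}^\infty v^{s-1}e^{-v}dv\,(l+1)^{-s}$, with $s=\frac{t}{\theta}+l+2$ and $s=\frac{t}{\theta}+l+1$ respectively. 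These are exactly the two bracketed terms in the statement.

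The main obstacle is justifying the interchange of the triple summation with the integral. Convergence of the integral itself poses no problem: the factor $e^{-(l+1)u}$ dominates $u^{t/\theta+\cdots}$, so $M_X(t)$ exists for every real $t$. For the interchange, I would argue exactly as in the moment theorem. The inner sum over $l$ is finite. For the outer sums, the binomial series in $G(x)\in(0,1)$ converge absolutely for $x>0$, so Fubini/Tonelli can be applied to the absolute series, or one can simply accept the formal expansion with $\sum\eta_{i,j,k}=1$ as the paper does. As a check, setting $t=0$ recovers $\sum\eta_{i,j,k}\int g=1$.
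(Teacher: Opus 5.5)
Your computation is the paper's proof step for step: insert (\ref{exapnsionpdf}), expand $(1-e^{\phi})^k$ in $l$, extract $e^{l+1}$, substitute $u=e^{\theta x}$, split off $(u-1)$, and use $\int_1^\infty u^{s-1}e^{-bu}\,du=b^{-s}\Gamma(s,b)$ with $b=l+1$. All exponents and constants are right. The paper interchanges sum and integral without comment, so your core argument is as complete as the paper's.

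The justification you sketch for the interchange, however, does not work. Pointwise absolute convergence of each binomial series separately is not enough, and the fully expanded series is not absolutely convergent. The expansion (\ref{exapnsionpdf}) re-expands the series in powers of $1-G$ into powers of $G$, and
\[\sum_i\bigl|\tbinom{a(j+1)-1}{i}\bigr|\sum_{k=0}^{i}\tbinom{i}{k}G^k=\sum_i\bigl|\tbinom{a(j+1)-1}{i}\bigr|(1+G)^i=\infty\]
for every $x>0$ whenever $a(j+1)\notin\mathbb{Z}$. So Tonelli cannot be applied to the fully expanded absolute series.

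What does work is to keep the finite $k$- and $l$-sums innermost, exactly as the statement is written, and interchange only the two genuine series.
\begin{itemize}
\item For the $j$-series, use $\sum_j\bigl|\binom{b-1}{j}\bigr|<\infty$ when $b>1$; all terms have the same sign when $b<1$. Combine this with $G^{a(j+1)-1}\le G^{a-1}$ and $\int_0^\infty e^{tx}g\,G^{a-1}\,dx<\infty$.
\item For the $i$-series, the inner $k$-sum equals $(-1)^i\binom{\gamma}{i}(1-G)^i$ with $\gamma=a(j+1)-1>-1$. Then $\sum_i\bigl|\binom{\gamma}{i}\bigr|\int_0^\infty e^{tx}g\,(1-G)^i\,dx<\infty$, because the integral is $O(1/i)$ (split at $G=1/2$) while $\bigl|\binom{\gamma}{i}\bigr|=O(i^{-\gamma-1})$.
\end{itemize}
This gives the formula as an iterated series.

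Finally, your $t=0$ check is misstated. At $t=0$ the $(i,j,k)$ term is $\eta_{i,j,k}\int_0^\infty g\,G^k\,dx=\eta_{i,j,k}/(k+1)$, so the check is $\sum\eta_{i,j,k}/(k+1)=1$, i.e.\ $\int f=1$. The identity $\sum\eta_{i,j,k}=1$ that you took from the paper is false in general. Summing over $k$ first kills every $i\ge 1$ and leaves $ab\sum_j(-1)^j\binom{b-1}{j}$, which equals $0$ for $b>1$.
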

	\noindent The proof of mgf is presented in \cref{appendixC}, where we utilize the definition of mgf and subsequently integrate the expanded form of the PDF.

	\subsection{Quantile Function}
	The $p^{th}$ quantile of proposing model can be given by:
	\begin{eqnarray} \label{quantile}
		x_p = \frac{1}{\th}~ ln \left( -W_{-1} \left( \frac{ \left(1-(1-p)^{1/b} \right)^{1/a}-1}{e} \right) \right)
	\end{eqnarray}
	The median M can be obtained by taking p = 0.5
	\subsection{Order Statistics}
	The density function $f_{m:n}(x)$ of the $m^{th}$ order statistic, where $m = 1, \ldots, n$ derived from independent and identically distributed (i.i.d.) random variables $X_1, \ldots, X_n$ following KTD, as computed in \cref{appendixD}, can be expressed as:
	\begin{align*}
		f_{m:n}(x)&= \th (e^{\th \x} -1) e^{\fx} \sum_{i,j,r=0}^{\infty}\sum_{k=0}^{i} \tau_{m:n} \left(1-e^{\fx} \right)^{k}
	\end{align*}
	where $$\tau_{i,j,k,r}^{(m:n)} = \frac{n!}{(m-1)!(n-m)!} \a \b (-1)^{i+j+k+r}  \binom{m-1}{r} \binom{b(r+n-m+1)-1}{j} \binom{\a (j+1)-1}{i} \binom{i}{k} $$
	\textbf{Case 1:} When $m=1 \Rightarrow r=0$. So, the  PDF of the smallest order statistic $X_{(1)}$ can be given by:
	$$f_{1:n}(x) = n \a \b \th (e^{\th \x} -1) e^{\fx}  \sum_{i,j=0}^{\infty}\sum_{k=0}^{i} (-1)^{i+j+k}  \binom{bn-1}{j} \binom{\a (j+1)-1}{i} \binom{i}{k} \left(1-e^{\fx} \right)^{k} $$
	\textbf{Case 2:} When $m=n$, the  PDF of the largest order statistic $X_{(n)}$ can be given by:
	$$f_{n:n}(x) = n \a \b \th (e^{\th \x} -1) e^{\fx} \sum_{i,j,r=0}^{\infty}\sum_{k=0}^{i} (-1)^{i+j+k+r} \binom{n-1}{r}  \binom{b(r+1)-1}{j} \binom{\a (j+1)-1}{i} \binom{i}{k} \left(1-e^{\fx} \right)^{k}$$
	
	\begin{corollary}
		Let $\X_k \sim KTD(\th,\a,\b_k)$ for $1\leq k$ be independent random variables. Then $X_{(1)}\sim KTD(\th,\a,\sum_{k=0}^{n} \b_k)$.
	\end{corollary}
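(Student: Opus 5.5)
The plan is to compute the survival function of the minimum directly and exploit the product structure of the Kumaraswamy-$G$ survival function. The statement has some typographical slack: the index runs over $k$ but is written with $k=0$ as lower limit and $n$ as upper limit. I will read it as saying that $X_1,\dots,X_n$ are independent with $X_k \sim KTD(\th,\a,\b_k)$, that $X_{(1)}=\min_{1\le k\le n}X_k$, and that $X_{(1)}\sim KTD\bigl(\th,\a,\sum_{k=1}^{n}\b_k\bigr)$.

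First, by independence,
\begin{eqnarray*}
P(X_{(1)}>\x)=\prod_{k=1}^{n}P(X_k>\x)=\prod_{k=1}^{n}S_k(\x),
\end{eqnarray*}
where $S_k$ is the survival function of $X_k$. By the earlier Corollary giving the survival function of the KTD,
\begin{eqnarray*}
S_k(\x)=\left(1-\left(1-e^{\fx}\right)^{\a}\right)^{\b_k}.
\end{eqnarray*}
Since $\th$ and $\a$ are common to all the $X_k$, the base $1-\left(1-e^{\fx}\right)^{\a}$ is the same for every $k$. The product therefore collapses to
\begin{eqnarray*}
\left(1-\left(1-e^{\fx}\right)^{\a}\right)^{\sum_{k=1}^{n}\b_k}.
\end{eqnarray*}

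Second, I compare this with the CDF (\ref{ktcdf}). Then $F_{X_{(1)}}(\x)=1-\left(1-\left(1-e^{\fx}\right)^{\a}\right)^{\b^{*}}$ with $\b^{*}=\sum_{k=1}^{n}\b_k$. Since each $\b_k>0$, we have $\b^{*}>0$, so this is a valid KTD CDF on $\x>0$. A distribution is determined by its CDF, which gives $X_{(1)}\sim KTD(\th,\a,\b^{*})$. Differentiating reproduces the density (\ref{ktpdf}) with $\b$ replaced by $\b^{*}$. As a consistency check, when all $\b_k=\b$ this gives $KTD(\th,\a,n\b)$. This agrees with the survival function $S(\x)^n$ implied by Case 1 of the order statistics, $f_{1:n}(x)=n f(x)S(x)^{n-1}$, before series expansion.

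There is no real obstacle here. The only point needing care is that $\th$ and $\a$ must be shared across the $X_k$ so that the bases coincide, and the indexing in the statement must be interpreted as $k=1,\dots,n$.
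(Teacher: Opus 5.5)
Your proof is correct. With $\theta$ and $a$ shared, $P(X_{(1)}>x)=\prod_{k=1}^{n}\bigl(1-(1-e^{\phi(x;\theta)})^{a}\bigr)^{b_k}$ collapses to a single KTD survival function with shape parameter $\sum_{k=1}^{n} b_k$. Reading the index range as $k=1,\dots,n$ is the right repair, since no $b_0$ is defined.

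The paper states this corollary without proof. The closest material is the i.i.d.\ order-statistic density in Appendix D. Its Case 1, with the factor $\binom{bn-1}{j}$, is the series-expanded form of the equal-parameter special case $b_k\equiv b$.

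Your direct survival-function argument is both simpler and more general than anything that formula yields. It needs no series expansion or resummation. It also covers unequal $b_k$, which an i.i.d.\ derivation based on $f_{1:n}=n f S^{n-1}$ cannot reach. Your consistency check against Case 1 is therefore the appropriate one.
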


	\section{Parameter Estimation} \label{estimation}
	In this section, we discuss the Maximum likelihood (ML) estimation, Maximum product spacing (MPS) and Bayesian analysis for the parameter estimation of the KTD distribution. 
	\subsection{Maximum Likelihood Estimation} 
	Let $x_1, \ldots, x_n$ be random sample of size $n$ from KTD($\pv$). The log-likelihood function for the parameter $\pv$ is given by
	\begin{eqnarray} \label{MLE}
		\mathcal{L}(\pv) &=& \ss \log(\a \b \th) + \sum_{\i=1}^{\ss} \log(e^{\th \x_i} -1) +  \sum_{\i=1}^{\ss} \fxi + (\a-1)\sum_{\i=1}^{\ss}\log y_i \nonumber \\ &&  + (\b-1)\sum_{\i=1}^{\ss} \log\left(1- y_i ^\a \right) 
	\end{eqnarray}
	where, $ 1-e^{\fxi} = y_i$
	\begin{align*}
		\mathcal{L}_a(\pv)  = \frac{\partial \mathcal{L}(\pv)}{\partial \a} &= \frac{\ss}{\a} + \sum_{\i=1}^{\ss} \log y_i \left[1- \frac{ (\b-1)}{y_i^{-\a}-1} \right] = 0 \\
		\mathcal{L}_b(\pv)  =\frac{\partial \mathcal{L}(\pv)}{\partial \b} &= \frac{\ss}{\b} + \sum_{\i=1}^{\ss} \log\left(1- y_i ^\a \right) = 0 \\
		\mathcal{L}_\th(\pv)  =\frac{\partial \mathcal{L}(\pv)}{\partial \th} &= \frac{\ss}{\th} + \sum_{\i=1}^{\ss} \frac{e^{\th x_i} x_i}{\left(e^{\th x_i}-1 \right)} - \sum_{\i=1}^{\ss} x_i \left(e^{\th x_i}-1 \right)+ \sum_{\i=1}^{\ss} \frac{x_i e^{\fxi} (e^{\th \x_i} -1)}{y_i} \left[ \a-1 - \frac{\a (\b-1)}{y_i^{-\a}-1} \right] = 0
	\end{align*}
	The solutions to the nonlinear equations $\nabla \mathcal{L}(\Theta) = 0$ yield the ML estimates of the parameters. These equations are solved iteratively. The observed information matrix can be given by 
	\[ J(\Theta)
	=
	\left[ {\begin{array}{ccc}
			\mathcal{L}_{aa} & \mathcal{L}_{ab} & \mathcal{L}_{a\th} \\
			. & \mathcal{L}_{bb} & \mathcal{L}_{b \th}\\
			. & . & \mathcal{L}_{\th \th}\\
	\end{array} } \right]
	\]
	and its elements are
	\begin{align*}
		\mathcal{L}_{aa}(\pv) &= \frac{-n}{a^2} - (b-1) \sum_{\i=1}^{\ss}\log^2 (y_i)  \left[\frac{y_i^a}{(1-y_i^a)^2} \right] \\
		\mathcal{L}_{bb}(\pv) &= \frac{-n}{b^2}\\
		\mathcal{L}_{\th \th}(\pv) &= \frac{-n}{\th ^2} - \sum_{i=1}^n x_i^2 e^{\th x_i} \biggl\{1 + \frac{1}{(e^{\th x_i} -1)^2 }\biggl\} - a^2 (b-1)\sum_{i=1}^n \frac{y_i^a}{(1-y_i^a)^2} \biggl\{ \frac{x_i e^{\fxi} (e^{\th \x_i} -1)}{y_i}  \biggl\}^2 \\ &+ \sum_{i=1}^n \left[ \a-1 - \frac{\a (\b-1)}{y_i^{-\a}-1} \right] \biggl\{  x_i^2 e^{\th x_i} \left( -1 + \frac{1}{y_i} \right) -\frac{x_i^2 (e^{\th x_i}-1)^2 e^{\fxi}}{y_i^2}
		\biggl\}\\
		\mathcal{L}_{ab}(\pv) &= \sum_{\i=1}^{\ss} \log y_i \left[\frac{1}{1-y_i^{-\a}} \right]\\
		\mathcal{L}_{a\th}(\pv) &= \sum_{\i=1}^{\ss} \frac{x_i e^{\fxi} (e^{\th \x_i} -1)}{y_i} \left[ 1 - \frac{ (\b-1)}{y_i^{-\a}-1} \biggl\{1 + \frac{a \log y_i}{1-y_i^a} \biggl\} \right]\\
		\mathcal{L}_{b\th}(\pv) &= -\a~\sum_{\i=1}^{\ss} \frac{x_i e^{\fxi} (e^{\th \x_i} -1)}{y_i} \left[\frac{1}{y_i^{-\a}-1} \right]
	\end{align*}

	\subsection{Maximum Product Spacing}
	The estimates can be derived by maximizing the geometric mean of the uniform spacing, expressed as:
	$$G = \left[ \prod_{i=1}^{\ss +1} D_i \right]^{1/(n+1)}$$
	where, $$D_i = F(x_{(i)}) - F(x_{(i-1)})$$
	Equivalently by maximizing the following function
	\begin{eqnarray}
		\mathcal{P}(\pv) = \frac{1}{n+1}  \sum_{\i=1}^{\ss +1} \log D_i
	\end{eqnarray}
	\begin{align*}
		\frac{\partial \mathcal{P}(\pv)}{\partial \a} &= \frac{1}{n+1}  \sum_{\i=1}^{\ss +1} \frac{\xi_\a({x_{(i)}}) - \xi_\a({x_{(i-1)}})}{D_i} = 0\\
		\frac{\partial \mathcal{P}(\pv)}{\partial \b} &= \frac{1}{n+1}  \sum_{\i=1}^{\ss +1} \frac{\xi_\b({x_{(i)}}) - \xi_\b({x_{(i-1)}})}{D_i} = 0\\
		\frac{\partial \mathcal{P}(\pv)}{\partial \th} &= \frac{1}{n+1}  \sum_{\i=1}^{\ss +1} \frac{\xi_\th({x_{(i)}}) - \xi_\th({x_{(i-1)}})}{D_i} = 0
	\end{align*}
	where,
	\begin{align*}
		\xi_a({x_{(i)}})  &= b~ y_{(i)}^a~ \log y_{(i)} \left( 1- y_{(i)}^a \right)^{b-1}\\
		\xi_b({x_{(i)}})  &= -b \left( 1- y_{(i)}^a \right)^{b-1}\\
		\xi_\th({x_{(i)}}) &= \a\b x_{(i)}~ y_{(i)}^{a-1} ~ e^{\phi(\x_{(i)}; \th)} \left( e^{\th x_{(i)} } -1\right) \left( 1- y_{(i)}^a \right)^{b-1}
	\end{align*}

	\subsection{A Bayesian Analysis}
	$$\pi (\a,\b,\th) = \pi(\a) \times \pi(\b) \times \pi(\th) $$
	Here, $\a \sim \Gamma (p_1, q_1), ~\b \sim \Gamma (p_2, q_2),~ \th \sim \Gamma (p_3, q_3)$ and $\Gamma (p_i, q_i)$ denotes gamma distribution with mean $p_i/{q_i}$ and variance $p_i/{q_i^2}$ for $p_i, q_i >0$ and density function is:
	$$f(x) = \frac{q_i^{p_i}~ \x^{~p_i-1}~ e^{-q_i~ \x}}{\Gamma (p_i)}$$
	The joint posterior pdf is given by
	\begin{align} \label{bayes}
		\pi (\a,\b,\th| \x) &\propto (\a \b \th)^n~ \prod_{i=1}^n (e^{\th x_i} -1)~ e^{\fxi} ~y_i^{a-1}~ (1-y_i^a)^{b-1} \times \pi(\a,\b,\th) 
	\end{align}
	The conditional posteriors are given by
	\begin{align*}
		\pi (\a| \x,\b,\th) &\propto \a^n~ \prod_{i=1}^n ~y_i^{a}~ (1-y_i^a)^{b-1} \times \pi(\a)\\
		\pi (\b| \x, \a, \th) &\propto \b^n~ \prod_{i=1}^n (1-y_i^a)^{b-1} \times \pi(\b)   \\
		\pi (\th| \x,\a,\b) &\propto \th^n~ \prod_{i=1}^n (e^{\th x_i} -1)~ e^{\fxi} ~y_i^{a-1}~ (1-y_i^a)^{b-1} \times \pi(\th)   
	\end{align*}
	
	\subsubsection*{Metropolis-Hastings (MH) algorithm} 
	The MH algorithm, introduced by Metropolis and Hastings, is one of the most widely used methods for obtaining Bayesian estimates. In addition, credible intervals and highest posterior density (HPD) intervals for $\a,~ \b~ \& ~\th$ are constructed using this approach. The MH algorithm is particularly effective when the posterior distribution cannot be derived analytically, as it enables the generation of posterior samples using arbitrary proposal distributions. The required samples from the posterior distribution, as specified in equation \eqref{bayes}, can be generated by following the steps detailed below. 
	\begin{enumerate}
		\item Select an initial guess for $(\a, \b, \theta)$ as $\Theta_0 = (\a_0, \b_0, \theta_0)$.
		\item Since the conditional posterior PDFs are not in closed form, generate them using a normal distribution. Specifically, draw $\a'$ from $N(\a_{n-1}, \sigma_1^2)$, $\b'$ from $N(\b_{n-1}, \sigma_2^2)$, and $\theta'$ from $N(\theta_{n-1}, \sigma_3^2)$, where $\sigma_1^2$, $\sigma_2^2$, and $\sigma_3^2$ represent the variances of $\a$, $\b$, and $\theta$, respectively.
		\item Compute $d = \min \left\{ 1, \frac{\pi (\Theta'| x)}{\pi (\Theta_{n-1} | x)} \right\}$.
		\item Generate a random number $r$ from $U(0, 1)$.
		\item If $r \leq d$, update $\Theta_n$ to $\Theta'$. Otherwise, set $\Theta_n = \Theta_{n-1}$.
		\item Repeat steps 2 to 5 for $N$ iterations.
	\end{enumerate}
	After burn-in starting $N_0$ samples, we get estimates as:
	\[
	\Tilde{a}_{MH} = \frac{1}{N-N_0} \sum_{j = N_0 + 1}^N a_j, \quad
	\Tilde{b}_{MH} = \frac{1}{N-N_0} \sum_{j = N_0 + 1}^N b_j, \quad
	\Tilde{\th}_{MH} = \frac{1}{N-N_0} \sum_{j = N_0 + 1}^N \th_j
	\]
	The $100 (1-\alpha)\%$ credible interval is given by $(\Th^{(\alpha/2)},\Th^{(1-\alpha/2)})$. HPD is a Bayesian credible interval that identifies the narrowest region containing \( (1 - \alpha) \) of the posterior probability. For a posterior density \( \mathbb{P}(A \mid x) \), a region \( R \) is an HPD region if $\mathbb{P}\{A \in R \mid x\} = 1 - \alpha,$ and for \( A_1 \in R \) and \( A_2 \notin R \):  
	$\mathbb{P}(A_1 \mid x) \geq \mathbb{P}(A_2 \mid x).$
	
	\section{Simulation Study} \label{simulation}
	In this study, Monte Carlo simulation experiments are conducted to assess the performance of above three estimation methods for estimating the parameters of the KTD.  
	\begin{table}[!ht]
		\centering
		\small
		\caption{Mean estimates for some parameter values}
		\label{table1}
		\begin{tabular}{@{}lccccccccc@{}}
			\toprule
			& \multicolumn{3}{c}{{MLE}} & \multicolumn{3}{c}{{MPS}} & \multicolumn{3}{c}{{Bayesian}} \\ 
			\cmidrule(lr){2-4} \cmidrule(lr){5-7} \cmidrule(lr){8-10}
			{} & $\hat{\a}$ & $\hat{\b}$ & $\hat{\th}$ & $\hat{\a}$ & $\hat{\b}$ & $\hat{\th}$ & $\hat{\a}$ & $\hat{\b}$ & $\hat{\th}$ \\
			\midrule
			\multicolumn{10}{c}{{True Values:} $\a=0.8$, $\b = 2.5$, $\theta = 1.7$} \\
			\midrule
			50 & 0.8043 & 2.8552 & 1.8856 & 0.8597 & 2.7017 & 2.0339 & 0.7867 & 2.5318 & 1.8867\\
			100 & 0.7988 & 2.9588 & 1.8069 & 0.8249 & 2.8157 & 1.9091 & 0.789 & 2.5669 & 1.8538\\
			200 & 0.795 & 2.938 & 1.7723 & 0.8105 & 2.8728 & 1.8254 & 0.788 & 2.5372 & 1.7813\\
			300 & 0.7959 & 2.9254 & 1.7487 & 0.8055 & 2.8158 & 1.802 & 0.791 & 2.6616 & 1.7757\\
			\midrule
			\multicolumn{10}{c}{{True Values:} $\a=2.8$, $\b = 0.7$, $\theta = 2.3$} \\
			\midrule 
			50 & 3.1128 & 0.6552 & 2.4605 & 3.4574 & 0.7051 & 2.4526 & 3.1539 & 0.6722 & 2.4839 \\
			100 & 3.0556 & 0.6762 & 2.4259 & 2.321 & 0.6736 & 2.4595 & 3.027  & 0.6957 & 2.4411 \\
			200 & 3.0044 & 0.6768 & 2.4137 & 3.2162 & 0.6156 & 2.505 & 2.9634 & 0.69 & 2.4283\\
			300 & 2.9774 & 0.6678 & 2.4209 & 3.1422 & 0.6006 & 2.511 & 2.9503 & 0.7094 & 2.4078\\
			\midrule
			\multicolumn{10}{c}{{True Values:} $\a=2$, $\b = 1.7$, $\theta = 0.9$} \\
			\midrule 
			50 & 2.228  & 1.5915  & 1.0322 & 2.4792 & 1.5887 & 1.0615 & 2.1112 & 1.4576 & 0.9898 \\
			100 & 2.1587 & 1.6754 & 1.004 & 2.3063 & 1.682 & 1.0246 & 2.0963 & 1.4924 & 0.9864 \\
			200 & 2.1117 & 1.6547 & 0.9999 & 2.2186 & 1.5153 & 1.0478 & 2.0637 & 1.4903 & 0.9895\\
			300 &  2.0934 & 1.6419 & 0.9996 & 2.1701 & 1.5292 & 1.0349 & 2.0616 & 1.5336 & 0.9908\\
			\bottomrule
		\end{tabular}
	\end{table}
	\begin{figure}
		\centering
		{\includegraphics[width=1\textwidth, height = 7cm]{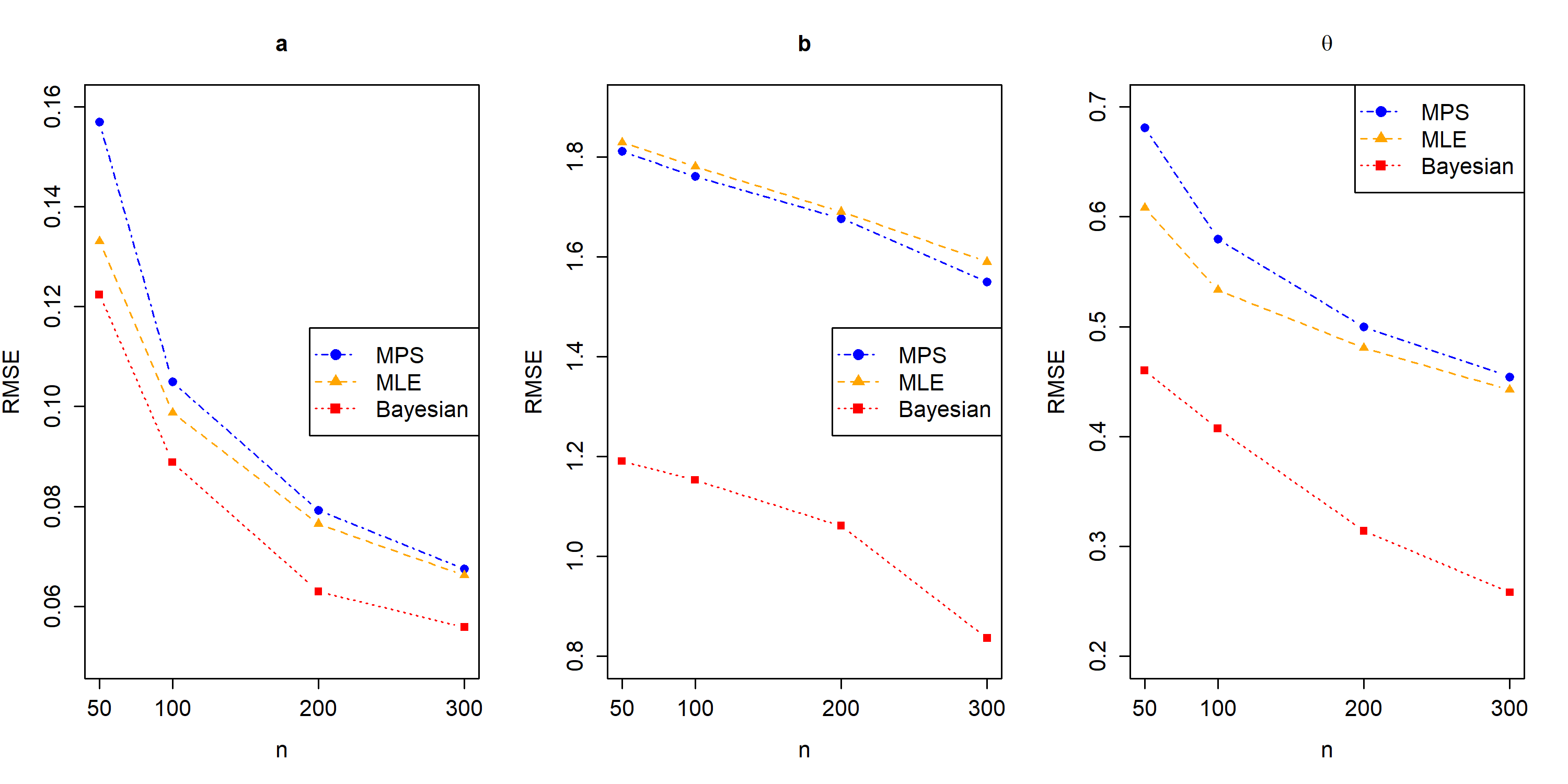}}
		\hfill
		{\includegraphics[width=1\textwidth, height = 7cm]{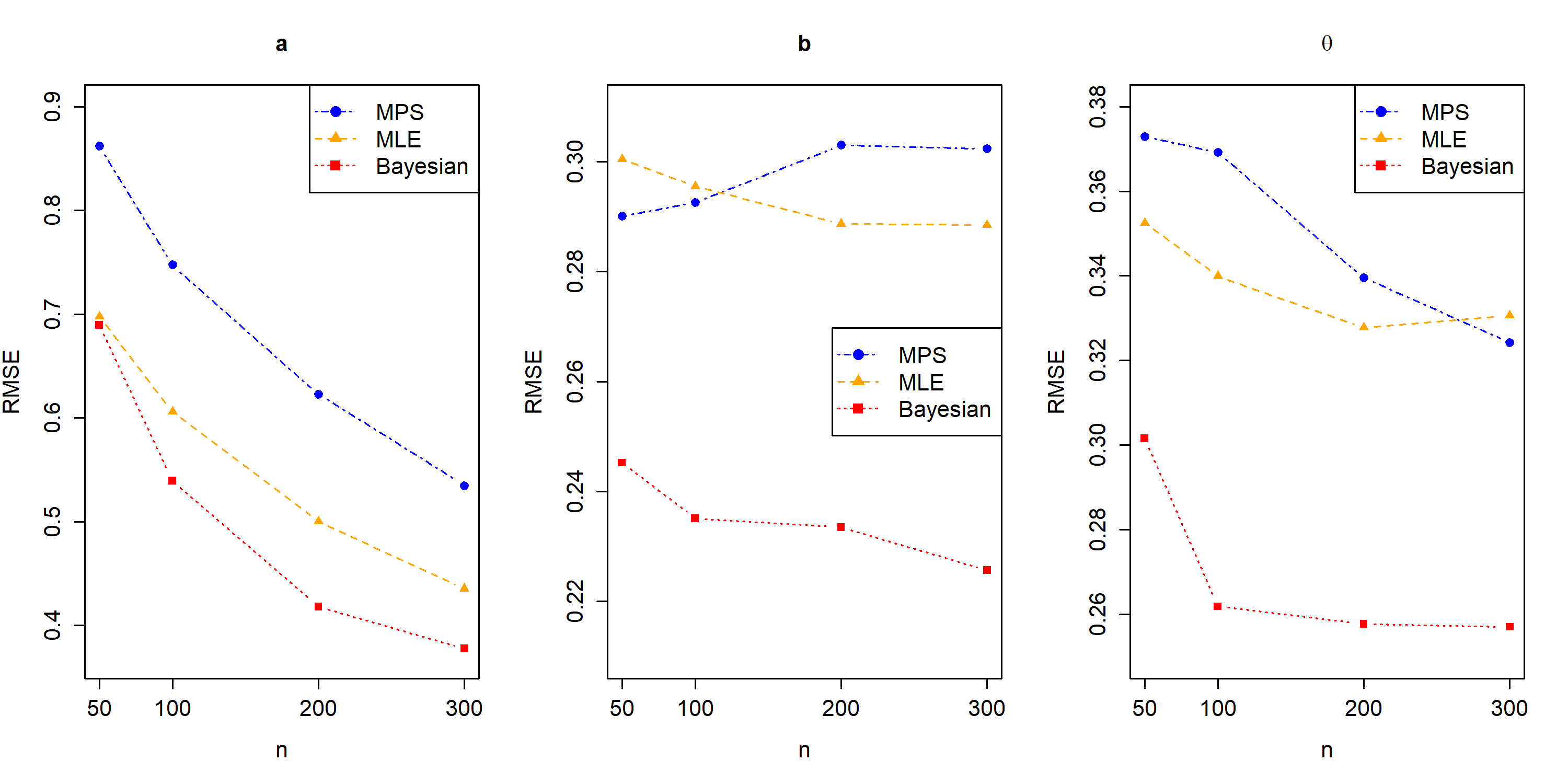}}
		\hfill
		{\includegraphics[width=1\textwidth, height = 7cm]{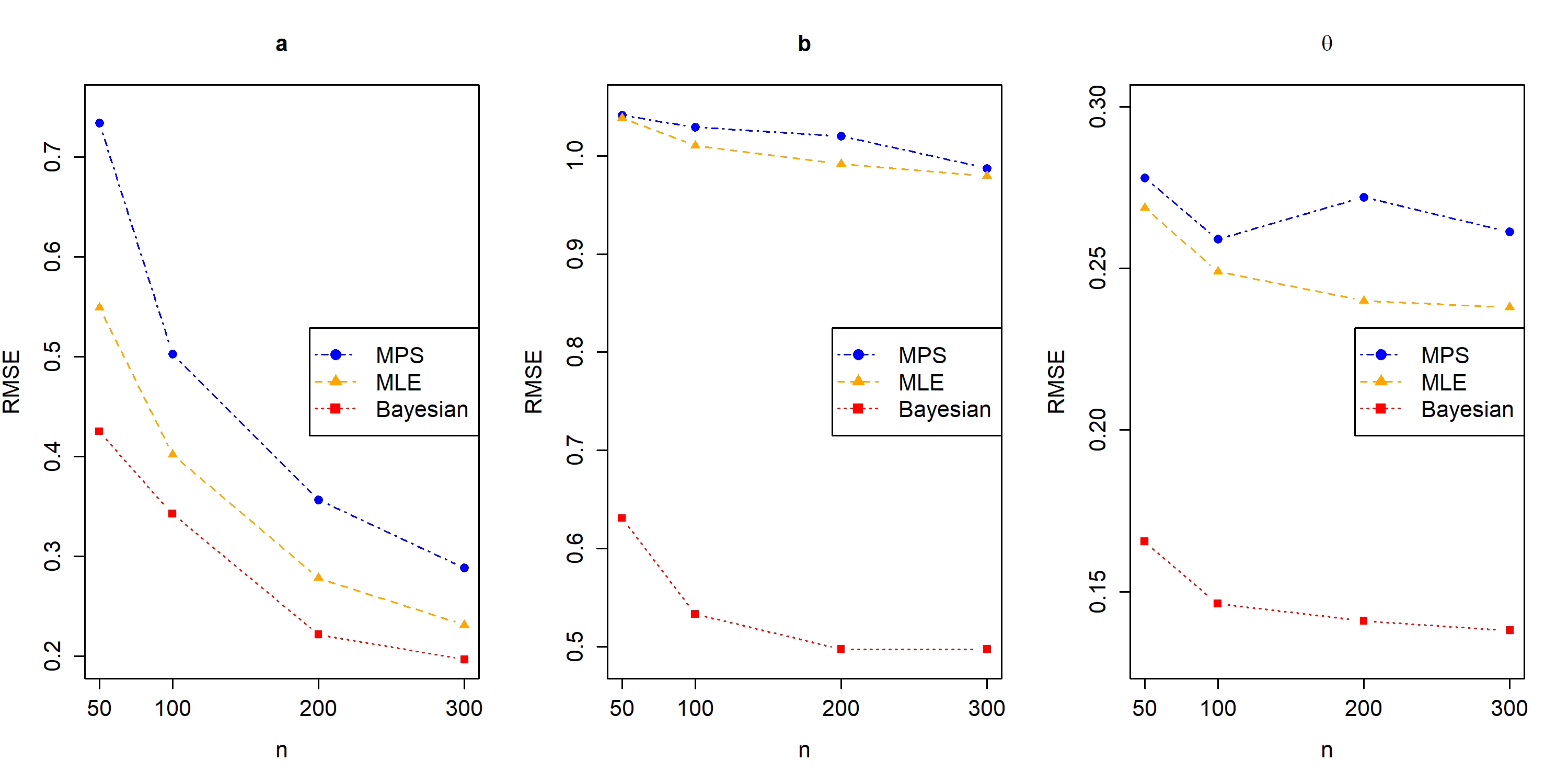}}
		\hfill
		\caption{RMSE plots}
		\label{Figure 3}
	\end{figure}
	The simulations are implemented using the R programming language, with the number of Monte Carlo replications set to \( N = 1000 \) for MLE and MPS methods. Four different sample sizes are considered: \( n = 50 \), \( 100 \), \( 200 \), and \( 300 \), along with various combinations of model parameters to ensure the robustness of the analysis. For each sample size, the evaluation of the MLE and MPS estimates are conducted using two key metrics: the mean and the root mean squared error (RMSE), calculated across the 1000 replications.

	For Bayesian estimation, the simulation involved \( N = 10,000 \) samples, with a burn-in period of \( N_0 = 1000 \) to ensure convergence before calculating posterior summaries. Bayesian estimation utilized prior distributions appropriate for the parameters of the KTD distribution, and posterior estimates are derived using Markov Chain Monte Carlo (MCMC) techniques. \\
	The results demonstrate that all three methods provide accurate and stable estimates for the KTD parameters, as indicated by the minimal RMSE values and means that closely approximate the true parameter values in Table \ref{table1} and Figure \ref{Figure 3}. Additionally, the RMSE consistently decreases with increasing sample size, confirming the asymptotic consistency of the estimators. This trend is clearly illustrated in Figure \ref{Figure 3}, which shows the RMSE plots for each estimation method across different sample sizes. However, a comparative analysis reveals that among the considered parameter settings, Bayesian estimation yielded lower RMSEs values across all sample sizes.  
	
	\section{KTD Regression model} \label{reg}
	In this section, we introduce parametric KTD regression model. Although the quantile function is available in closed form through the Lambert \(W\) function, the mean and variance do not provide a comparably simple basis for regression parameterization. We therefore re-parameterize \(b\) through its median, which is directly obtained from the quantile function as:
	\[b = \frac{\log(0.5)}{\log\left( 1-\left( 1-e^{\fm} \right)^\a \right)}\]
	Let \(Y\) represent a random variable following the KTD regression framework. By introducing the re-parameterized expression for \(b\) into eq. \eqref{ktcdf}, the CDF is defined as:
	\begin{eqnarray}\label{ktrcdf}
		\F(\y; \mu, \a, \th) = 1-\left( 1-\left( 1-e^{\fy} \right)^\a \right)^{\frac{\log(0.5)}{\log\left( 1-\left( 1-e^{\fm} \right)^\a \right)}},~~~ y>0
	\end{eqnarray}
	and the corresponding PDF is given by:
	\begin{eqnarray}\label{ktrpdf}
		\f(\y; \mu, \a, \th) = \frac{\a \th \log(0.5) (e^{\th \y} -1) e^{\fy} \left(1-e^{\fy} \right)^{\a-1} \left(  1-\left(1-e^{\fy} \right)^{\a} \right)^{\frac{\log(0.5)}{\log\left( 1-\left( 1-e^{\fm} \right)^\a \right)}-1}}{\log\left( 1-\left( 1-e^{\fm} \right)^\a \right)}~
	\end{eqnarray}
	Similar to eq. \eqref{quantile}, we can obtain the quantile function of regression model as:
	\begin{eqnarray}\label{ktrqf}
		y_{p} (p; \mu, \a, \th) = \frac{1}{\th}~ ln \left[ -W_{-1} \left(  \left\{ \left(1-(1-p)^{\frac{\log\left( 1-\left( 1-e^{\fm} \right)^\a \right) }{\log(0.5)}} \right)^{1/a}-1\right\} /e \right) \right],~~~ 0<p<1~
	\end{eqnarray}
	Now, consider observations \(Y_1,Y_2,\ldots,Y_n\), where each response \(Y_i\) follows the KTD distribution given in Equation~(\ref{ktrpdf}), with median \(\mu_i\) and unknown parameters \(\Theta=(a,\theta)\). We adopt a working-independence formulation in which the conditional marginal distribution of \(Y_i\) is characterized through its median \(\mu_i\), without explicitly specifying a temporal dependence structure. The regression framework models the median \(\mu_i\) as a function of explanatory variables through
	\begin{equation} \label{regr}
		\eta_i=g(\mu_i)=\sum_{m=1}^{k}z_{im}\beta_m
		=z_i^T\boldsymbol{\beta}.
	\end{equation}
	Here, \(\boldsymbol{\beta}=(\beta_1,\ldots,\beta_k)^T\in\mathbb{R}^k\) denotes the vector of regression coefficients, with \(k<n\), and \(\eta=(\eta_1,\ldots,\eta_n)^T\) denotes the vector of linear predictors. The vector \(z_i^T\) contains the covariate values for the \(i\)th observation. The link function \(g(\cdot)\) is assumed to be strictly monotone and twice continuously differentiable, with its range and domain determined by the support of the median parameter.

	\subsection{Likelihood inference} \label{cmle}
	Let $Y_i \sim \text{KTD}(\mu_i, \Theta), ~~i = 1,2,\cdots,n$, with observed values $y_i$ and co-variates \(z_{i}\). Then, the log-likelihood of KTD regression model is given by: 
	\begin{eqnarray} \label{CMLE}
		\ell_i( \mu_i , \a, \th) &=&  \log(\a \th) + \log(-\log(0.5)) - \log\left\{-\log\left( 1-\left( 1-e^{\fmt} \right)^\a \right) \right\} + \log(e^{\th \y_i} -1) +  \fyt  \nonumber \\ &&  + (\a-1) \log\left(1-e^{\fyt} \right)  + \left(\psi_i -1\right) \log \left(1- \left(1-e^{\fyt} \right)^\a \right)   
	\end{eqnarray}
	where, $\psi_i =  \frac{\log(0.5)}{\log\left( 1-\left( 1-e^{\fmt} \right)^\a \right)}$.
	
	To obtain the MLEs denoted by $\hat{a}$, $\hat{\theta}$, $\hat{\boldsymbol{\beta}} $, we need to simultaneously solve a system of non-linear score equations. These expressions correspond to the first-order partial derivatives of the log-likelihood function and are defined as follows:
	\begin{align*}
		U_{\a} = \sum_{i=1}^{n} \frac{\partial 	\ell_i}{\partial \a}, ~~
		U_{\theta} = \sum_{i=1}^{n} \frac{\partial 	\ell_i}{\partial \theta},~~
		U_{\beta_j} = \sum_{i=1}^{n} \frac{\partial	\ell_i}{\partial \mu_i} \frac{d \mu_i}{d \eta_i} \frac{\partial \eta_i}{\partial \beta_j},
	\end{align*}
	These derivatives constitute the elements of the complete score vector. In the case of the regression coefficients \(\beta _{j}\), the score component explicitly utilizes the chain rule to map the effect of the covariates through the linear predictor \(\eta _{i}\) and the inverse link function to the individual medians \(\mu _{i}\). Since the score equations are nonlinear and analytically intractable, closed-form expressions for the parameter estimators cannot be obtained. Therefore, the resulting optimization problem is solved iteratively using numerical maximization methods, such as the Newton--Raphson or quasi-Newton algorithms, to obtain the parameter estimates by maximizing the log-likelihood function until convergence.

	\subsection{Diagnostic Analysis} \label{diagnostic} 
	Model adequacy is assessed using Cox--Snell residuals \cite{cox1968general} and normalized quantile residuals \cite{dunn1996randomized}. For the fitted conditional distribution \(F_t\), the Cox--Snell residual is defined as
	\[r_t=-\log\left\{1-F_t(y_t| \mu_t)\right\}\]
	Under a correctly specified continuous model, the Cox--Snell residuals follow a standard exponential distribution. Their adequacy is therefore assessed graphically using a P--P plot, which compares the empirical distribution of the residuals with the theoretical \(Exp(1)\) distribution. A close agreement between the empirical and theoretical probabilities indicates an adequate distributional fit. In addition, normalized quantile residuals are obtained as
	\[z_t=\Phi^{-1}\left\{F_t(y_t | \mu_t)\right\}\]
	where \(\Phi^{-1}\) denotes the quantile function of the standard normal distribution. Under correct model specification, \(z_t\) follows \(N(0,1)\). Their distributional adequacy is assessed using a normal Q--Q plot, in which the ordered \(z_t\) values are compared with the corresponding theoretical standard normal quantiles. For illustration, both the P--P plot of the Cox--Snell residuals and the Q--Q plot of the normalized quantile residuals are presented for a representative grid. For the analysis of the complete gridded dataset, where graphical assessment for each location is impractical, the Cramér--von Mises (CvM) goodness-of-fit test is used as an automated diagnostic measure. The test evaluates the discrepancy between the empirical distribution of the normalized quantile residuals and the standard normal distribution:
	$$
	W^{2}=n\int_{-\infty}^{\infty}
	\left[F_n(z)-\Phi(z)\right]^2d\Phi(z),
	$$
	where \(n\) is the number of temporal observations. The null hypothesis (\(H_{0}\)) states that the empirical \(z\)-scores follow a standard normal profile. A computed \(p\)-value below the significance threshold (\(p < 0.05\)) rejects the null hypothesis, indicating model mis-specification. Conversely, a \(p\)-value \(\ge 0.05\) confirms that the empirical residuals do not deviate significantly from the target Gaussian curve, demonstrating a successful distributional fit.
	
	\subsection{Simulation Study}
	The finite-sample performance of the MLE developed in subsection~(\ref{cmle}) is examined through Monte Carlo experiments for the proposed KTD regression model. For each setting, 1000 independent datasets are generated with sample sizes $n=50,100,200$ and $500$. Three cases are considered, corresponding to models with one, two, and three regressors in addition to the intercept $\beta_0$. The parameter settings are given by:
	\begin{itemize}
		\item  $a=1.2$, $\th=2.8$, $\beta_0=-0.7$, $\beta_1=-1.4$
		
		\item
		$a = 2.5$, $\th = 0.8$, $\beta_0 = 1.6$, $\beta_1 = 0.5$, $\beta_2 = -1.7$.
		
		\item 
		$a = 0.9$, $\th = 1.5$, $\beta_0 = 0.4$, $\beta_1 = -0.6$, $\beta_2 = 0.8$, $\beta_3 = -1.2$.
	\end{itemize}
	
	\begin{table}[!ht]
		\centering
		\small
		\caption{Mean estimates, Bias and MSE of the KTD regression model.}
		\resizebox{\textwidth}{!}{
			\begin{tabular}{@{}lcccccccccccc@{}}
				\toprule
				& \multicolumn{4}{c}{Mean} & \multicolumn{4}{c}{Bias} & \multicolumn{4}{c}{MSE} \\
				\cmidrule(lr){2-5} \cmidrule(lr){6-9} \cmidrule(lr){10-13}
				{} & 50 & 100 & 200 & 500 & 50 & 100 & 200 & 500 & 50 & 100 & 200 & 500 \\
				
				\midrule \multicolumn{13}{c}{True values: $a=1.2$, $\th=2.8$, $\beta_0=-0.7$, $\beta_1=-1.4$} \\ \midrule $\hat{a}$
				
				& 1.2478 & 1.2273 & 1.2101 & 1.2046 & 0.0478 & 0.0273 & 0.0101 & 0.0046 & 0.0358 & 0.0155 & 0.0071 & 0.0029 \\ $\hat{\th}$ & 2.8906 & 2.8964 & 2.8522 & 2.8216 & 0.0906 & 0.0964 & 0.0522 & 0.0216 & 0.2376 & 0.2309 & 0.2281 & 0.2178 \\ $\hat{\beta}_0$ & -0.6992 & -0.6941 & -0.7020 & -0.6984 & 0.0008 & 0.0059 & -0.0020 & 0.0016 & 0.0215 & 0.0129 & 0.0057 & 0.0023 \\ $\hat{\beta}_1$ & -1.4077 & -1.4092 & -1.3993 & -1.4013 & -0.0077 & -0.0092 & 0.0007 & -0.0013 & 0.0583 & 0.0338 & 0.0166 & 0.0061 \\

				\midrule
				\multicolumn{13}{c}{True values: $a=2.5$, $\th=0.8$, $\beta_0=1.6$, $\beta_1=0.5$, $\beta_2=-1.7$} \\
				\midrule
				
				$\hat{a}$
				& 2.6628 & 2.6201 & 2.5928 & 2.5552
				& 0.1628 & 0.1201 & 0.0928 & 0.0552
				& 0.1003 & 0.0694 & 0.0485 & 0.0249 \\
				
				$\hat{\th}$
				& 0.6958 & 0.6945 & 0.7510 & 0.7595
				& -0.1042 & -0.1055 & -0.0490 & -0.0405
				& 0.1988 & 0.1846 & 0.1653 & 0.1250 \\
				
				$\hat{\beta}_0$
				& 1.6224 & 1.6225 & 1.6042 & 1.6055
				& 0.0224 & 0.0225 & 0.0042 & 0.0055
				& 0.0849 & 0.0554 & 0.0311 & 0.0118 \\
				
				$\hat{\beta}_1$
				& 0.5096 & 0.5103 & 0.5074 & 0.5002
				& 0.0096 & 0.0103 & 0.0074 & 0.0002
				& 0.1521 & 0.0857 & 0.0283 & 0.0099 \\
				
				$\hat{\beta}_2$
				& -1.7402 & -1.7344 & -1.7125 & -1.7061
				& -0.0402 & -0.0344 & -0.0125 & -0.0061
				& 0.1943 & 0.0795 & 0.0464 & 0.0168 \\

				\midrule
				\multicolumn{13}{c}{True values: $a=0.9$, $\th=1.5$, $\beta_0=0.4$, $\beta_1=-0.6$, $\beta_2=0.8$, $\beta_3=-1.2$} \\
				\midrule
				
				$\hat{a}$
				& 0.9542 & 0.9220 & 0.9036 & 0.9039
				& 0.0542 & 0.0220 & 0.0036 & 0.0039
				& 0.0294 & 0.0159 & 0.0085 & 0.0041 \\
				
				$\hat{\th}$
				& 1.6268 & 1.5869 & 1.5666 & 1.5112
				& 0.1268 & 0.0869 & 0.0666 & 0.0112
				& 0.2083 & 0.1847 & 0.1535 & 0.1073 \\
				
				$\hat{\beta}_0$
				& 0.4106 & 0.4172 & 0.4098 & 0.4074
				& 0.0106 & 0.0172 & 0.0098 & 0.0074
				& 0.1154 & 0.0682 & 0.0475 & 0.0205 \\
				
				$\hat{\beta}_1$
				& -0.5912 & -0.6157 & -0.6142 & -0.6081
				& 0.0088 & -0.0157 & -0.0142 & -0.0081
				& 0.1392 & 0.0944 & 0.0483 & 0.0216 \\
				
				$\hat{\beta}_2$
				& 0.8272 & 0.8146 & 0.8140 & 0.8058
				& 0.0272 & 0.0146 & 0.0140 & 0.0058
				& 0.1482 & 0.0889 & 0.0519 & 0.0214 \\
				
				$\hat{\beta}_3$
				& -1.2436 & -1.2171 & -1.2107 & -1.2044
				& -0.0436 & -0.0171 & -0.0107 & -0.0044
				& 0.1731 & 0.1024 & 0.0602 & 0.0248 \\
				\bottomrule
		\end{tabular}}
		\label{sim1}
	\end{table}
	The simulation results for the three parameter settings are reported in Table~\ref{sim1}. Overall, the mean estimates exhibit decreasing deviations from their corresponding true parameter values as the sample size increases. For all three parameter settings, the absolute biases and MSEs generally decline as the sample size increases. This reduction indicates that the parameter estimates become more accurate and stable for larger samples. For the first setting, the estimates of $a$, $\beta_0$, and $\beta_1$ approach their respective true values rapidly, while $\theta$ exhibits a comparatively larger bias and MSE, although both decrease with increasing sample size. In the second setting, the same pattern is observed, with the estimates of the regression coefficients remaining close to their true values even for smaller samples. The third setting, which includes three regressors, also shows satisfactory parameter recovery, with the mean estimates approaching the true values and the corresponding biases and MSEs decreasing as $n$ increases. These results indicate that the MLE provides stable and accurate parameter estimates across the considered model specifications.
	
	\section{Statistical Data Analysis} \label{data}
	We consider monthly gridded rainfall data for the Northwest Himalayas (NWH), covering Jammu \& Kashmir (JK), Himachal Pradesh (HP), and Uttarakhand (UK), over the period 2016--2025. The data are obtained from the India Meteorological Department (IMD) (\small\url{https://www.imdpune.gov.in/cmpg/Griddata/Rainfall_25_NetCDF.html}) \normalsize at a spatial resolution of \(0.25^\circ\times0.25^\circ\). The data set consists of 525 grid cells and used as the response variable.
	\begin{figure}[!ht]
		\begin{minipage}[t]{0.33\textwidth}
			\centering
			\includegraphics[width=1\textwidth]{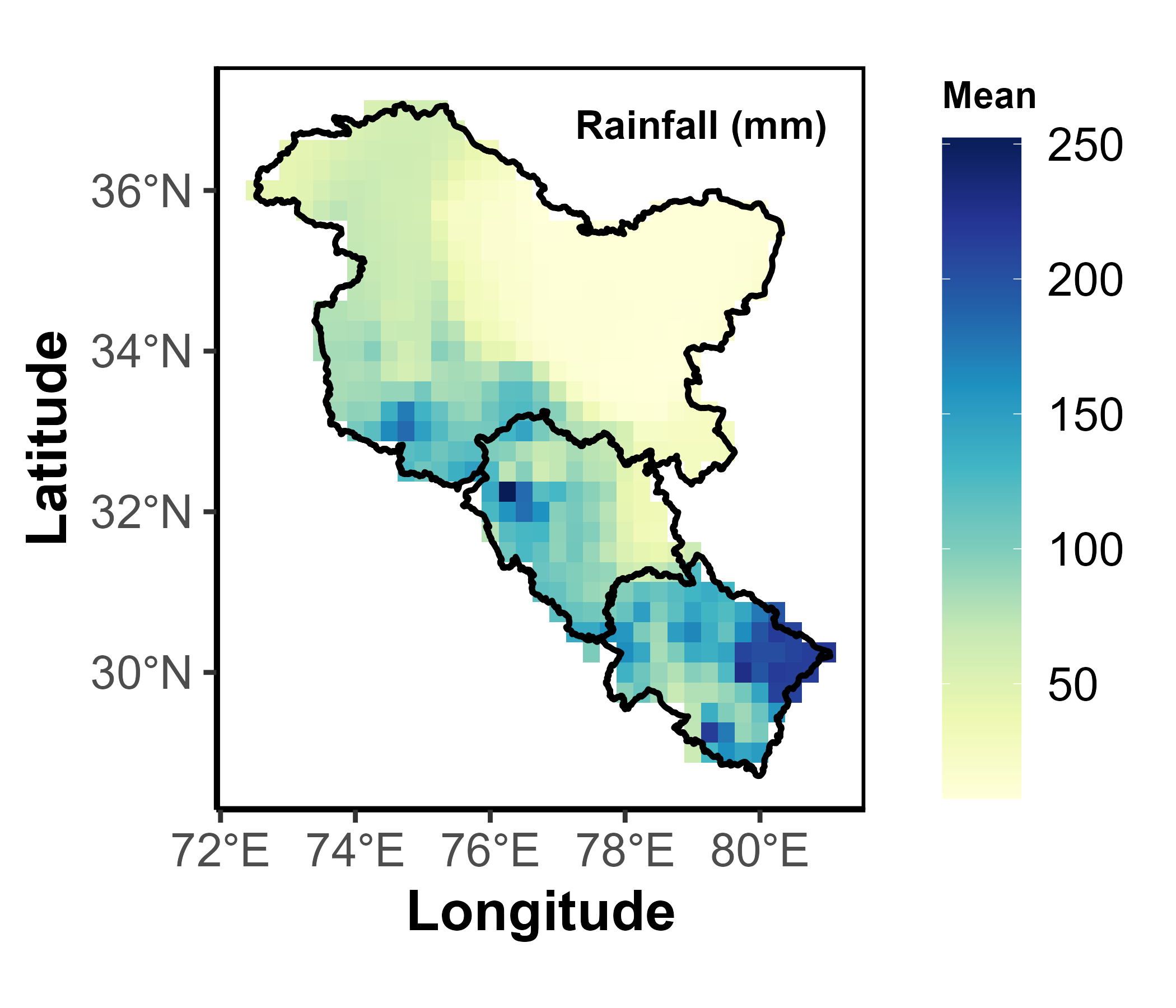}
			\caption*{(a) }
		\end{minipage}\hfill
		\begin{minipage}[t]{0.33\textwidth}
			\centering
			\includegraphics[width=1\textwidth]{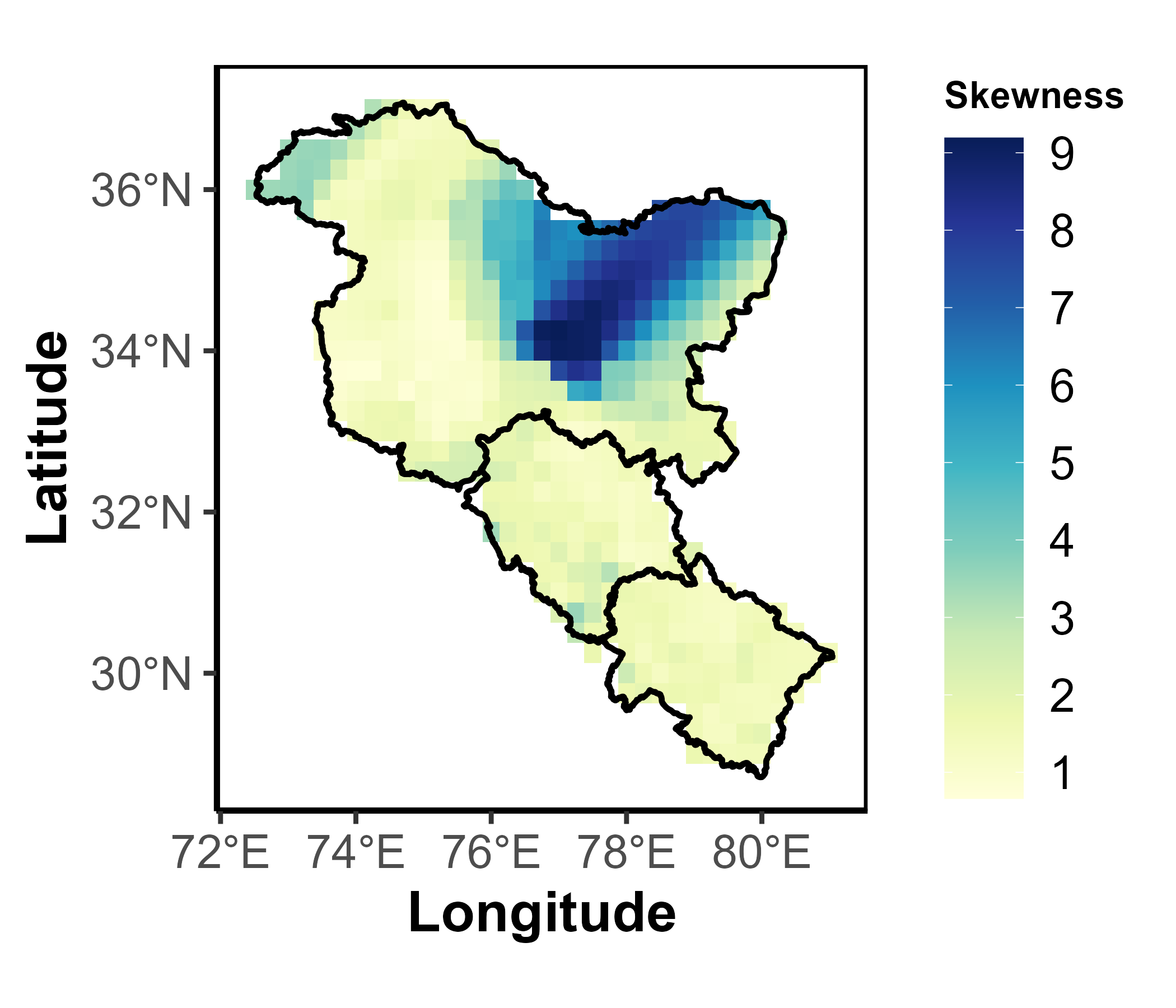}
			\caption*{(b) }
		\end{minipage}\hfill
		\begin{minipage}[t]{0.33\textwidth}
			\centering
			\includegraphics[width=1\textwidth]{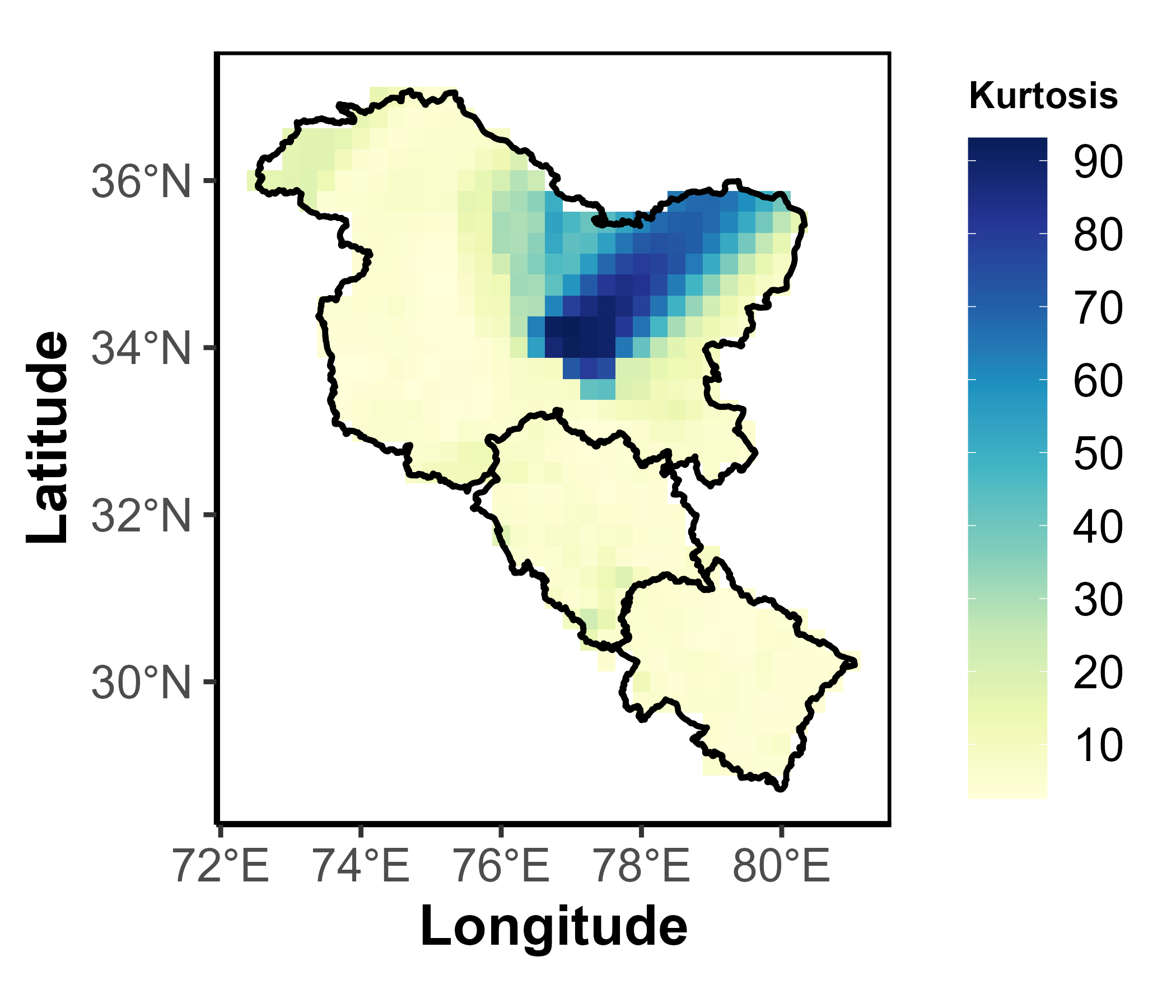}
			\caption*{(c) }
		\end{minipage}
		\caption{ Descriptive Statistics of NWH Rainfall for the time-period 2016--2025: (a) Mean, (b) Skewness and (c) Kurtosis.}
		\label{descriptive}
	\end{figure}
	Figure \ref{descriptive} illustrates the spatial distribution of the three statistical moments (mean, skewness and kurtosis) derived from the raw monthly gridded rainfall series over the 525 grid cells for the 2016–2025 time period. The mean spatial rainfall pattern (Figure \ref{descriptive}a) indicates the presence of different hydroclimatic zones due to the complex orography of the NWH. The higher mean values are clustered along the southern foothills and mountain slopes of HP and UK. Conversely, mean rainfall totals are severely depressed in the trans-Himalayan rain shadow areas of northern JK. The skewness map (Figure~\ref{descriptive}b) shows positive values across all 525 locations, indicating that the monthly rainfall distributions are consistently right-skewed. The pronounced skewness suggests that rainfall totals are characterized by a concentration of moderate observations accompanied by occasional large rainfall events. The kurtosis distribution (Figure~\ref{descriptive}c) further shows the similar pattern and indicates substantial departures from normality, with elevated values across the region and particularly pronounced levels in the upper eastern region. The combined evidence of positive skewness and high kurtosis highlights the non-normal and heterogeneous nature of the rainfall distributions.  
	
	To account for seasonal variation, annual and semi-annual harmonic terms, $\sin(2 \pi t/12)$, $\cos(2 \pi t/12)$, $\sin(2 \pi t/6)$, and $\cos(2 \pi t/6)$, are included as regressors, following the purely seasonal specification considered for the generalized gamma model by \cite{perreault2025distributional}. 
	\begin{figure}[!ht]
		\begin{minipage}[t]{0.33\textwidth}
			\centering
			\includegraphics[width=1\textwidth]{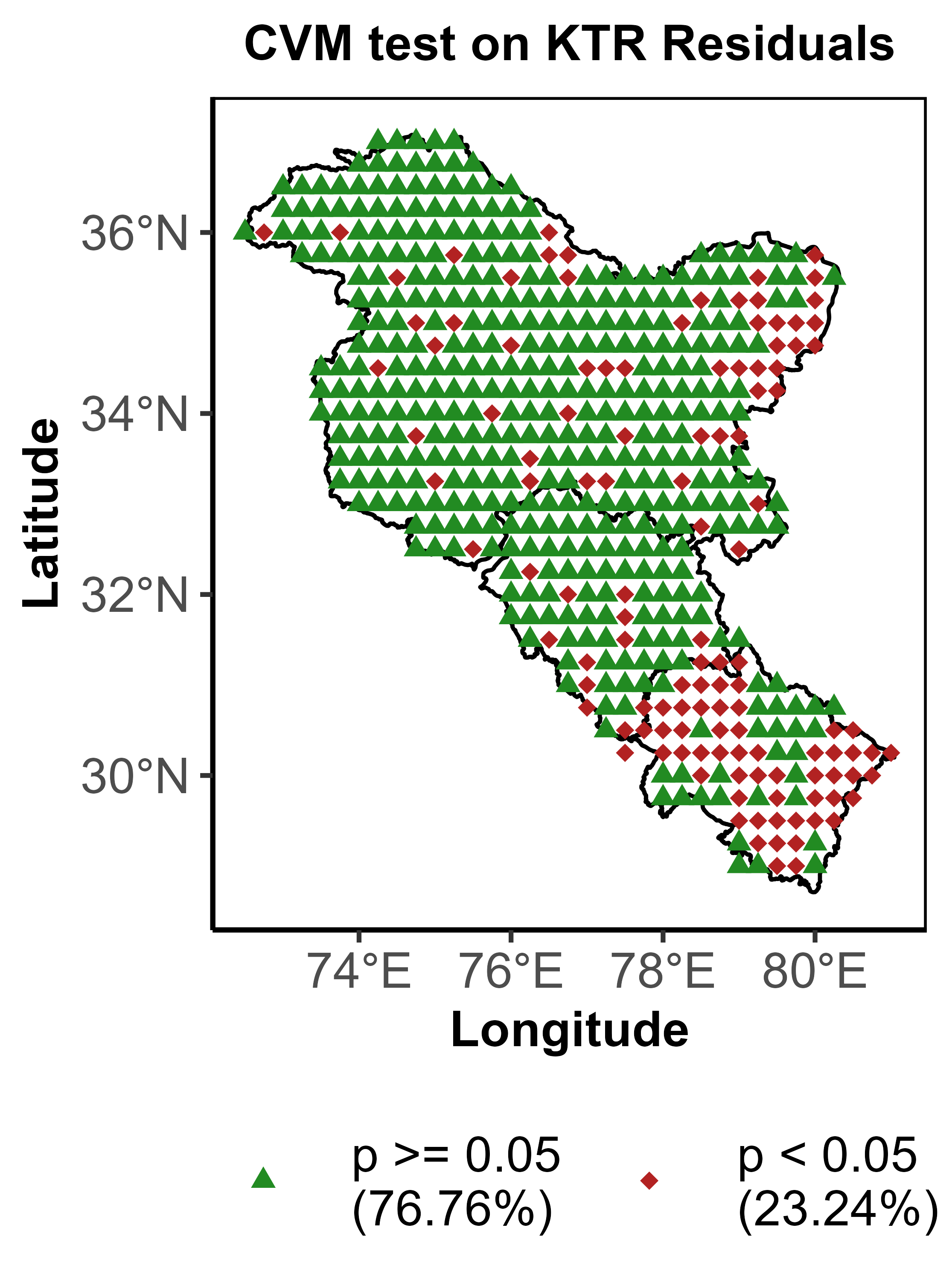}
			\caption*{(a) }
		\end{minipage}\hfill
		\begin{minipage}[t]{0.33\textwidth}
			\centering
			\includegraphics[width=1\textwidth]{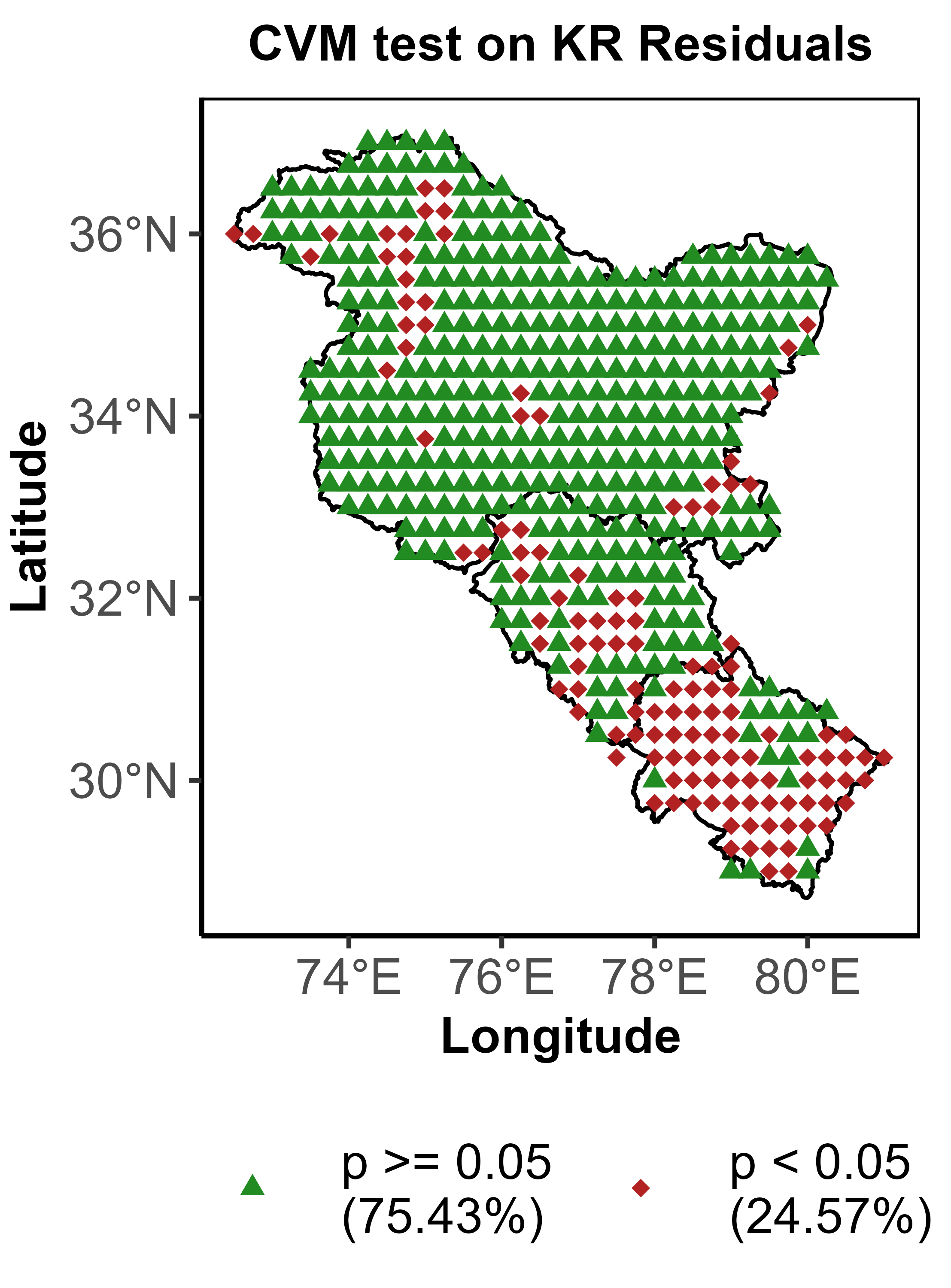}
			\caption*{(b) }
		\end{minipage}\hfill
		\begin{minipage}[t]{0.33\textwidth}
			\centering
			\includegraphics[width=1\textwidth]{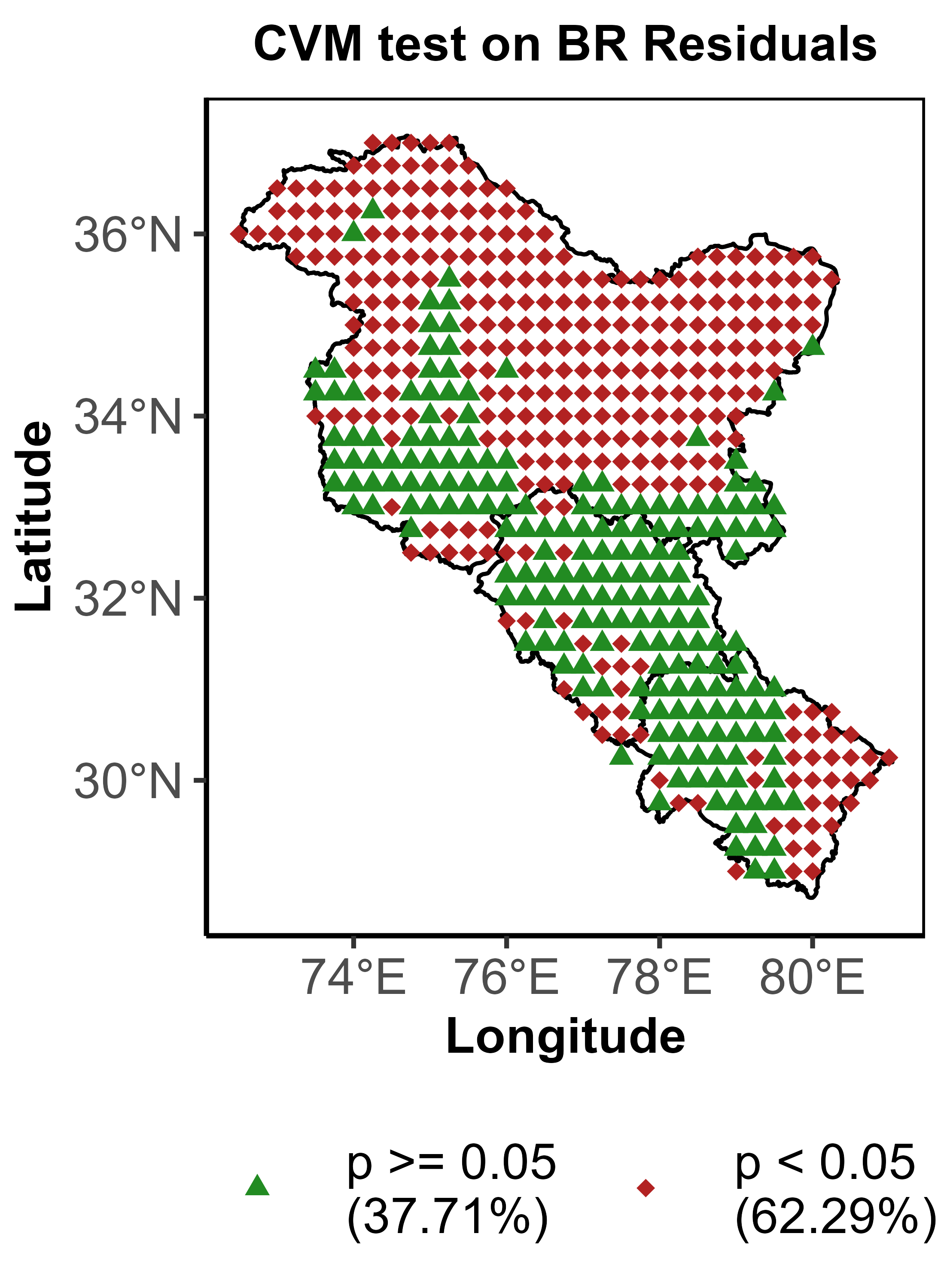}
			\caption*{(c) }
		\end{minipage}
		\caption{Spatial mapping of residual diagnostics via Cram\'{e}r-von Mises goodness-of-fit tests for the quantile residuals: (a) Kumaraswamy-Teissier Regression, (b) Kumaraswamy Regression and (c) Beta Regression.}
		\label{Residuals}
	\end{figure}
	Accordingly, the regression structure is specified through the logit link as
	\begin{equation}\label{logit}
		\log \left( \frac{\mu_{t}}{1-\mu_{t}} \right) = \beta_0 + \sum_{m \in \{6, 12\}} \left[ \beta_{\sin(2 \pi t/m)} \sin\left(\frac{2\pi t}{m}\right) + \beta_{\cos(2 \pi t/m)} \cos\left(\frac{2\pi t}{m}\right) \right]
	\end{equation}
	
	Using equation~(\ref{logit}), the rainfall series at each grid is modelled using three regression models: Kumaraswamy-Teissier Regression (KTR), Kumaraswamy Regression (KR), and Beta Regression (BR) models. Since, the KR and BR models are defined on the unit interval, the rainfall observations are linearly transformed to $(0,1)$ prior to model fitting. To maintain a common response scale across the three models, the same transformed rainfall series is used for each model. Before selecting the best-fitting model at each grid, the fitted models are subjected to the diagnostic procedure described in Section~\ref{diagnostic}. In particular, the Cram'{e}r--von Mises (CvM) goodness-of-fit test is applied to the quantile residuals to assess their distributional adequacy. For the grid-wise analysis, the CvM test is used as an automated diagnostic screening criterion, with models having \(p\ge0.05\) retained for subsequent comparison. Figure~\ref{Residuals} shows the spatial distribution of the CvM test results for the three models. The KTR model satisfies the diagnostic criterion at 403 of the 525 grids (76.76\%), while the corresponding numbers for KR and BR are 396 (75.43\%) and 198 (37.71\%), respectively. The KTR and KR models provide acceptable residual fits over a large part of JK and HP, including areas characterized by pronounced skewness and kurtosis, whereas the BR model satisfies the diagnostic criterion at a smaller proportion of the grids covering western UK and HP.
	
	\begin{table}[!ht]
		\centering
		\caption{Parameter Estimates and Distributional Regression Diagnostics for Rainfall data at Grid Location: 37$^{\circ}$N, 74.25$^{\circ}$E}
		\label{tab:grid_521_details}
		\small 
		\begin{tabular}{lccccccccc}
			\hline
			\textbf{Parameter} & \multicolumn{3}{c}{\textbf{KTR Model}} & \multicolumn{3}{c}{\textbf{KR Model}} & \multicolumn{3}{c}{\textbf{BR Model}} \\ \cline{2-10} 
			& \textbf{Estimate} & \textbf{SE} & \textbf{$p$-value} & \textbf{Estimate} & \textbf{SE} & \textbf{$p$-value} & \textbf{Estimate} & \textbf{SE} & \textbf{$p$-value} \\ \hline
			$a$                 & 0.4685  & 0.0398 & $<0.001$ & 0.8156  & 0.0726 & $<0.001$ & —       & —      & —        \\
			$\theta$           & 1.4136  & 0.0448 & $<0.001$ & —       & —      & —        & —       & —      & —        \\
			$\phi$              & —       & —      & —        & —       & —      & —        & 3.4071  & 0.4585 & $<0.001$ \\
			$\beta_0$            & -2.4387 & 0.1140 & $<0.001$ & -2.3323 & 0.1316 & $<0.001$ & -1.5536 & 0.1127 & $<0.001$ \\
			$\beta_{\sin(2 \pi t/12)}$  & -0.2946 & 0.1455 & 0.0428   & -0.4179 & 0.1715 & 0.0148   & -0.0437 & 0.1294 & 0.7353   \\
			$\beta_{\cos(2 \pi t/12)}$  & -0.6250 & 0.1444 & $<0.001$ & -0.9917 & 0.1704 & $<0.001$ & -0.5590 & 0.1321 & $<0.001$ \\
			$\beta_{\sin(2 \pi t/6)}$   & 0.5574  & 0.1356 & $<0.001$ & 0.8140  & 0.1587 & $<0.001$ & 0.5387  & 0.1339 & $<0.001$ \\
			$\beta_{\cos(2 \pi t/6)}$   & -0.3218 & 0.1552 & 0.0381   & -0.1054 & 0.1881 & 0.5754   & -0.0726 & 0.1277 & 0.5695   \\ \hline
			\textbf{Diagnostics} &        &        &          &         &        &          &         &        &          \\
			
			CvM $p$-value       & \multicolumn{3}{c}{0.3508}      & \multicolumn{3}{c}{0.2616}      & \multicolumn{3}{c}{0.0006}      \\
			Log-Likelihood      & \multicolumn{3}{c}{149.7505}    & \multicolumn{3}{c}{140.8173}    & \multicolumn{3}{c}{118.2318}    \\
			AIC                 & \multicolumn{3}{c}{{-285.5011}} & \multicolumn{3}{c}{-269.6346}  & \multicolumn{3}{c}{-224.4637}  \\
			BIC                 & \multicolumn{3}{c}{{-265.9886}} & \multicolumn{3}{c}{-252.9097}  & \multicolumn{3}{c}{-207.7387}  \\ \hline
		\end{tabular}
	\end{table}
	Since it is impractical to report parameter estimates and diagnostic measures for all 525 grids in tabular form, Table~\ref{tab:grid_521_details} presents detailed results for the grid located at 37$^{\circ}$N, 74.25$^{\circ}$E. 
	\begin{figure}[!ht]
		\begin{center}
			\includegraphics[width=1.0\textwidth]{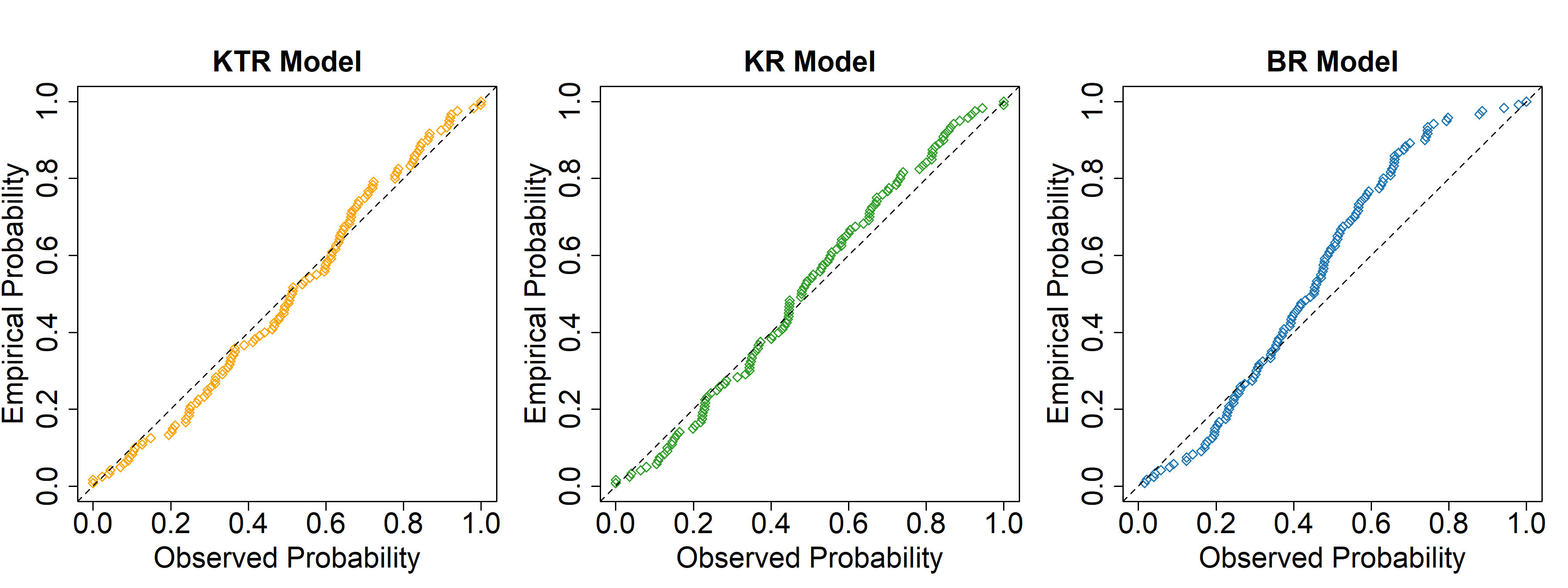}
		\end{center}
		\caption{P-P plots of the Cox-Snell residuals of the KTR, KR, and BR regression models for rainfall data at grid location: 37$^{\circ}$N, 74.25$^{\circ}$E.}
		\label{pp-map}
	\end{figure} 
	\begin{figure}[!ht]
		\begin{center}
			\includegraphics[width=1.0\textwidth]{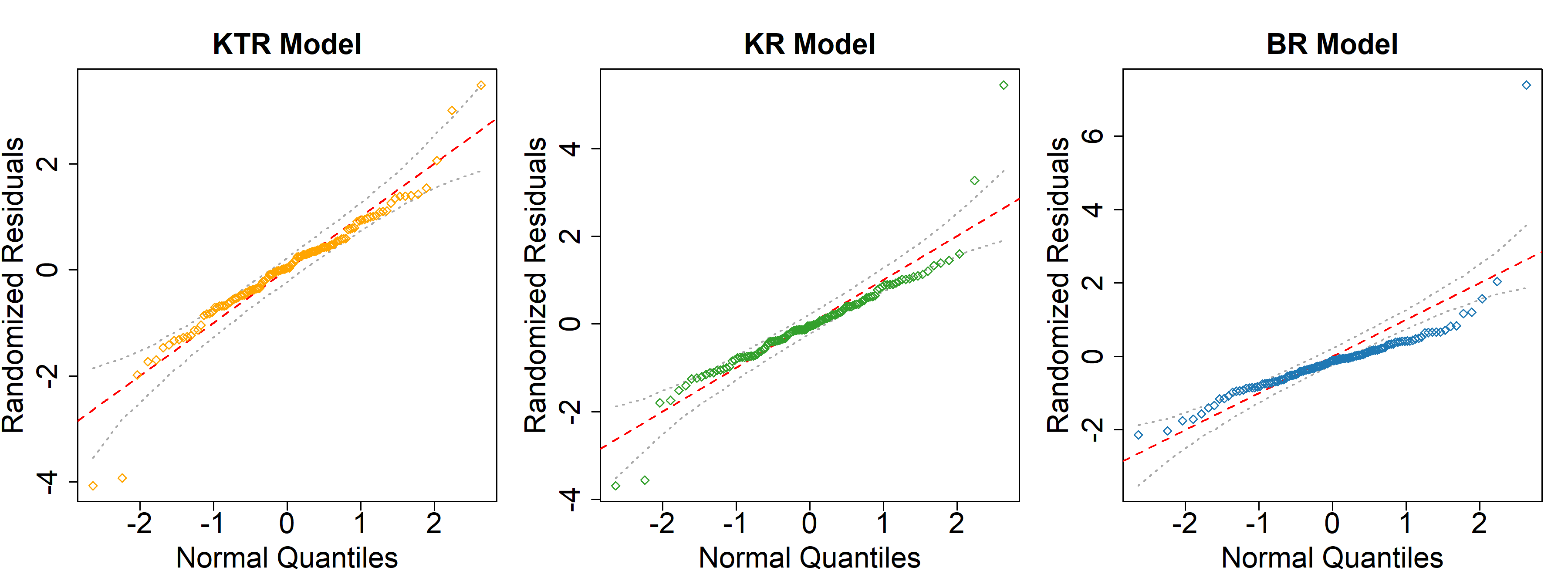}
		\end{center}
		\caption{Q-Q plots of the randomized quantile residuals of the KTR, KR, and BR regression models for rainfall data at grid location: 37$^{\circ}$N, 74.25$^{\circ}$E.}
		\label{qq-map}
	\end{figure}
	The table provides the parameter estimates, standard errors, corresponding (p)-values, CvM (p)-values, log-likelihood, Akaike Information Criterion (AIC), and Bayesian Information Criterion (BIC) for the three models. The AIC and BIC are defined as
	\begin{equation*}
		\mathrm{AIC}=2k-2\ell(\hat{\Theta}), \qquad
		\mathrm{BIC}=k\log(n)-2\ell(\hat{\Theta}),
	\end{equation*}
	where \(\ell(\hat{\Theta})\) denotes the maximized log-likelihood, (k) is the number of estimated parameters, and (n) is the sample size. Both criteria balance model fit against model complexity, with smaller values indicating a more favourable trade-off between the two. For this grid, both KTR and KR satisfy the CvM criterion, whereas BR does not. Among the three models, KTR attains the largest log-likelihood and the smallest AIC and BIC, indicating a better fit relative to the competing models after accounting for model complexity. The graphical diagnostics are consistent with the numerical comparison, with the P--P plot of the Cox--Snell residuals in Figure~\ref{pp-map} and the normal Q--Q plot of the randomized quantile residuals in Figure~\ref{qq-map} showing closer agreement with their respective theoretical distributions for KTR at this grid. The significance of the harmonic coefficients further indicates that both annual and semi-annual components contribute to the seasonal variation in rainfall at this location.
	
	The same procedure is applied independently to all 525 grids. After filtering the candidate models using the CvM diagnostic, the model with the largest log-likelihood is selected at each grid. The resulting spatial distribution is shown in Figure~\ref{ll-map}. Approximately 57\% of the grids are selected as KTR, 14\% as KR, and 22\% as BR, while the remaining 32 grids are not adequately represented by any of the three models. 
	\begin{figure}[!ht]
		\begin{center}
			\includegraphics[width=0.6\textwidth]{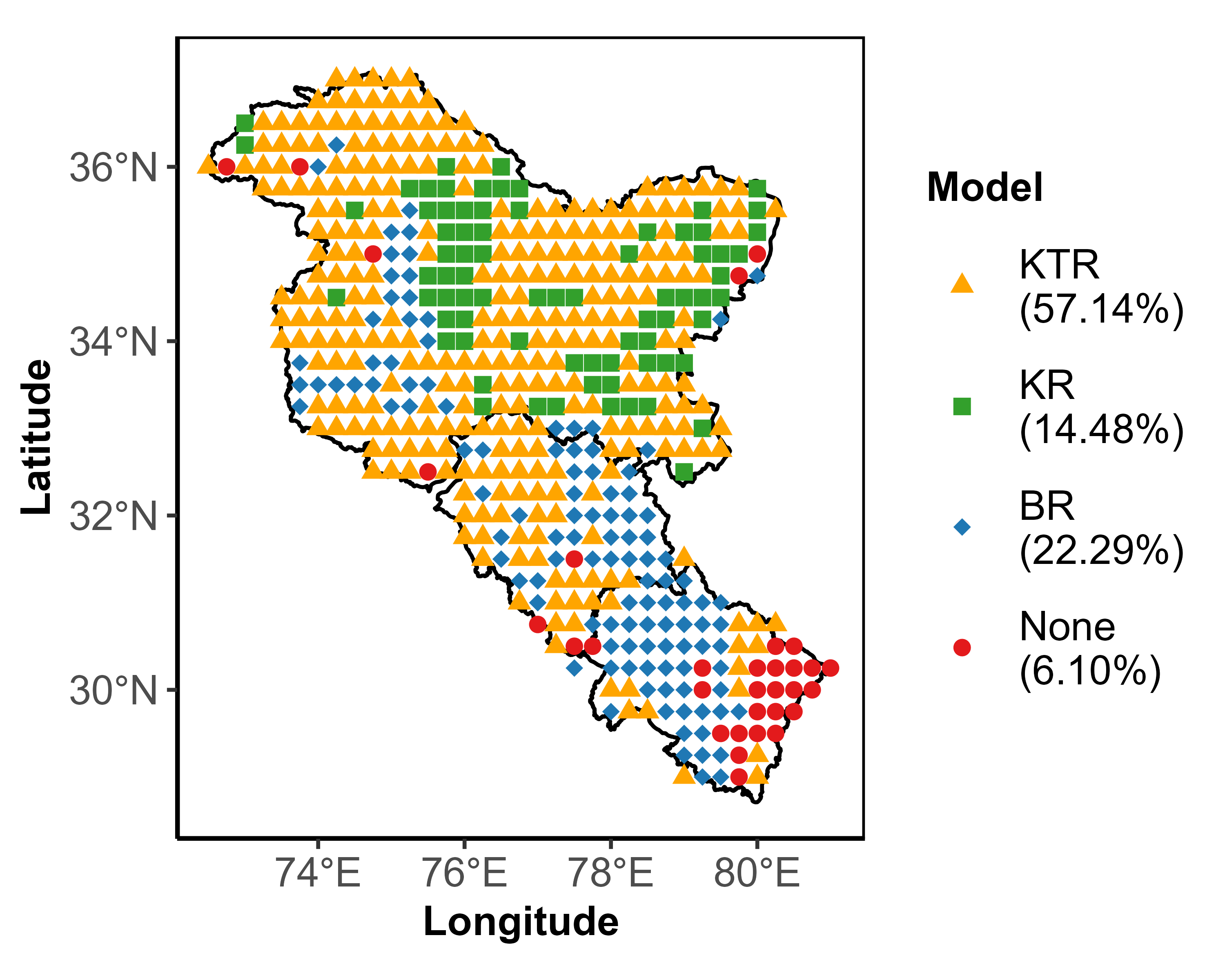}
		\end{center}
		\caption{Spatial distribution of the optimal model selection based on maximum log-likelihood values pre-filtered by the Cram\'{e}r-von Mises diagnostic test ($p \geq 0.05$). }
		\label{ll-map}
	\end{figure} 
	The KTR model is selected across a substantial part of western HP and JK, including regions exhibiting pronounced skewness and kurtosis, whereas BR is selected over much of UK and eastern HP, and KR is selected mainly over the western part of JK. The widespread selection of KTR, together with its relatively high proportion of grids satisfying the CvM diagnostic, demonstrates its utility in representing the heterogeneous and non-normal rainfall distributions observed across the NWH. At the same time, the spatially varying selection of KR and BR indicates that the distributional characteristics of rainfall differ across the region and that alternative regression models remain relevant for specific areas.

	\section{Conclusion} \label{conclusion}
	In this study, we introduce the Kumaraswamy--Teissier distribution (KTD), a three-parameter extension obtained by applying the Kumaraswamy-$G$ construction to the Teissier distribution. The additional shape parameters of the Kumaraswamy generator, together with the scale parameter, provide a flexible distribution for positive-valued observations. The proposed distribution can accommodate different density and hazard-rate shapes, including increasing, decreasing, and bathtub-type hazard-rate behaviour. Several statistical properties of the KTD are derived, including its moments, moment generating function, quantile function, and order-statistic distributions. For parameter estimation, maximum likelihood, maximum product spacing, and Bayesian methods are considered. The simulation results show that the estimated parameters approach their true values as the sample size increases, while the RMSE decreases across the considered settings. Among the three estimation approaches, the Bayesian method provides the most favourable performance in terms of mean estimates and RMSE.
	
	We further develop a median-based KTD regression model and demonstrate its finite-sample performance through simulation, where the bias and MSE decrease as the sample size increases. The practical utility of the proposed regression model is illustrated using NWH rainfall data with annual and semi-annual harmonic regressors. Compared with Kumaraswamy and beta regression, the KTD regression satisfies the CvM diagnostic at 76.76\% of the 525 grids and is selected for approximately 57\% of the grids after diagnostic filtering and maximum log-likelihood. Overall, the three models provide an adequate representation for 493 of the 525 grids, leaving 32 grids unexplained by the considered candidates. The dominant spatial footprint of the KTD framework, particularly across regions exhibiting severe non-normal skewness and high kurtosis, underscores its empirical robustness and highlights its potential as a powerful tool for modeling highly heterogeneous environmental and meteorological phenomena.

	\subsection*{Acknowledgments}
	We gratefully acknowledge all data providers for their contributions, which have been appropriately cited in the statistical analysis. \\

	\appendix
	
	\section{Appendix A: Proof of CDF} \label{appendixA}
	
	\begin{theorem}
		The expression in Equation \eqref{ktcdf} is a CDF. 
	\end{theorem}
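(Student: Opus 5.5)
To show that \eqref{ktcdf} is a CDF on $(0,\infty)$ (extended by $F(x)=0$ for $x\le 0$), I will check the three defining properties: the limits at $0$ and $\infty$, monotonicity, and right-continuity. The natural route is to work through the inner quantity $\phi(x;\theta)=\theta x-e^{\theta x}+1$ and the baseline Teissier CDF $G(x)=1-e^{\phi(x;\theta)}$ from \eqref{td}. The argument then reduces to two facts: $G$ is itself a continuous CDF on $(0,\infty)$, and the Kumaraswamy map $u\mapsto 1-(1-u^a)^b$ is a continuous increasing bijection of $[0,1]$ onto itself.

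First I establish the properties of $\phi$.
\begin{itemize}
\item $\phi(0;\theta)=0$.
\item $\frac{d}{dx}\phi(x;\theta)=\theta(1-e^{\theta x})<0$ for $x>0$, so $\phi$ is strictly decreasing.
\item $\phi(x;\theta)\to-\infty$ as $x\to\infty$, because the exponential term dominates the linear one.
\end{itemize}
Hence $e^{\phi(x;\theta)}$ decreases continuously from $1$ to $0$, and $G(x)$ increases continuously from $0$ (as $x\to0^+$) to $1$ (as $x\to\infty$), with $0<G(x)<1$ for $x>0$.

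Next I compose with $H(u)=1-(1-u^a)^b$ for $a,b>0$. It satisfies $H(0)=0$ and $H(1)=1$, is continuous on $[0,1]$, and has
\[
H'(u)=ab\,u^{a-1}(1-u^a)^{b-1}>0
\]
on $(0,1)$. Since $F=H\circ G$, continuity (and hence right-continuity) follows. The limits are $\lim_{x\to0^+}F(x)=H(0)=0$ and $\lim_{x\to\infty}F(x)=H(1)=1$. Monotonicity follows from the chain rule: $F'(x)=H'(G(x))\,G'(x)$, which is exactly the density \eqref{ktpdf}, and every factor there is positive for $x>0$. As a cross-check, I can note that $f\ge 0$ and that $\int_0^\infty f(x)\,dx=[F]_0^\infty=1$, via the substitution $u=G(x)$ followed by $v=u^a$, which reduces the integral to $\int_0^1 b(1-v)^{b-1}\,dv=1$.

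The only mildly delicate step is the boundary behaviour at $u=0$ and $u=1$ when $a<1$ or $b<1$, where $H'$ blows up. This affects only the density, not continuity or the limits of $F$, since $H$ itself stays continuous at the endpoints. Monotonicity can therefore be argued on the open interval and extended by continuity. Everything else is routine.
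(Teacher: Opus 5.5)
Your proof is correct and follows the paper's structure: you check the limits, monotonicity and right-continuity by writing $F$ as the Kumaraswamy map $u\mapsto 1-(1-u^{a})^{b}$ composed with the Teissier CDF $G$. The differences are minor. You verify the properties of $G$ directly from $\phi(x;\theta)$ instead of citing them, you get monotonicity from $F'=f>0$ rather than the paper's chain of order-preserving inequalities, and you state the extension $F(x)=0$ for $x\le 0$ explicitly.
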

	\begin{proof} 
		The necessary and sufficient conditions for a function to be a CDF are:\\
		i) $\lim_{x \to 0} F(x) = 0$ and $\lim_{x \to \infty} F(x) = 1$.\\
		ii) $F(x)$ is a non-decreasing function of $x$.\\
		iii) $F(x)$ is right-continuous.\\
		i) To prove $\lim_{x \to 0} F(x) = 0$, we have
		\begin{align*}
			\lim_{x \to 0}  1-e^{\fx} &= 0 \quad \text{(Since Equation \eqref{td} is a CDF.)} \\
			\lim_{x \to 0} F(x) &= \lim_{x \to 0} 1-\left( 1-\left( 1-e^{\fx} \right)^\a \right)^\b \\
			&= 1-\left( 1-\left( 0 \right)^\a \right)^\b \\
			&= 0.
		\end{align*}
		To prove $\lim_{x \to \infty} F(x) = 1$, we have
		\begin{align*}
			\lim_{x \to \infty}  1-e^{\fx} &= 1 \quad \text{(Since Equation \eqref{td} is a CDF.)} \\
			\lim_{x \to \infty} F(x) &= \lim_{x \to \infty} 1-\left( 1-\left( 1-e^{\fx} \right)^\a \right)^\b \\
			&=  1-\left( 1-\left( 1 \right)^\a \right)^\b \\
			&= 1.
		\end{align*}
		ii) To prove $F(x)$ is a non-decreasing function of $x$, let's take $x_1 \leq x_2$. 
		\begin{align*}
			1-e^{\phi(\x_1;\th)} &\leq 1-e^{\phi(\x_2;\th)} \quad \text{(Since Equation \eqref{td} is a non-decreasing function.)}\\
			\text{Since, a>0 so,}\\
			\Rightarrow (1-e^{\phi(\x_1;\th)})^a &\leq (1-e^{\phi(\x_2;\th)})^a\\ 
			\Rightarrow 1-(1-e^{\phi(\x_1;\th)})^a &\geq 1-(1-e^{\phi(\x_2;\th)})^a\\
			\text{Since, b>0 so,}\\
			\Rightarrow (1-(1-e^{\phi(\x_1;\th)})^a)^b &\geq (1-(1-e^{\phi(\x_2;\th)})^a)^b\\
			\Rightarrow 1-(1-(1-e^{\phi(\x_1;\th)})^a)^b &\leq 1-(1-(1-e^{\phi(\x_2;\th)})^a)^b\\
			\Rightarrow F(x_1) &\leq F(x_2)\\
		\end{align*}
		iii) Being a CDF, \( G(x) = 1 - e^{\phi(x; \theta)} \) is continuous since the exponential function is a continuous function. Now, since the power function of a continuous function is also continuous, for \( a > 0 \), \( G(x)^a \) is also continuous. \\
		Similarly, for \( b > 0 \), the expression \( 1 - (1 - G(x)^a)^b \) is also continuous, as it is constructed by applying arithmetic and composition of continuous functions, which preserve continuity. \\
		Hence, \( F(x) \) is right-continuous, as required by the definition of a CDF.
	\end{proof}

	\section{Appendix B: $r^{th}$ Order Moments Calculation}\label{appendixB}
	\begin{proof}
		Using the series expansion of the PDF from Equation (\ref{exapnsionpdf}), we have:
		\begin{align}
			E(X^r) &= \int_0^\infty x^r f_X(x) \, dx \nonumber \\
			&= \sum_{i,j=0}^\infty \sum_{k=0}^i \eta_{i,j,k} \int_0^\infty x^r \theta (e^{\theta x} - 1) e^{\fx} \left(1 - e^{\fx}\right)^k \, dx,
		\end{align}
		where
		\[
		\eta_{i,j,k} = a b (-1)^{i+j+k} \binom{\b-1}{j} \binom{\a(j+1)-1}{i} \binom{i}{k}.
		\]
	Since,
		\[
		\left(1 - e^{\fx}\right)^k = \sum_{l=0}^k \binom{k}{l} (-1)^l e^{l \fx}.
		\]
		Thus,
		\[
		e^{\fx} \left(1 - e^{\fx}\right)^k = \sum_{l=0}^k \binom{k}{l} (-1)^l e^{(l+1) \fx}.
		\]
		Substitute \( \fx= \theta x - e^{\theta x} + 1 \):
		\[
		e^{(l+1) \fx} = e^{(l+1)(\theta x + 1)} e^{-(l+1) e^{\theta x}}.
		\]
		Therefore,
		\begin{align}
			E(X^r) &= \sum_{i,j=0}^\infty \sum_{k=0}^i \sum_{l=0}^k \eta_{i,j,k} \binom{k}{l} (-1)^l e^{l+1} \nonumber \\
			&\quad \times \int_0^\infty x^r \theta (e^{\theta x} - 1) e^{(l+1)\theta x} e^{-(l+1) e^{\theta x}} \, dx.
		\end{align}
		Let \( t = e^{\theta x} \), so \( x = \frac{\ln t}{\theta} \), \( dx = \frac{dt}{\theta t} \), and \( e^{\theta x} = t \). Then \( e^{(l+1)\theta x} = t^{l+1} \), and \( e^{-(l+1) e^{\theta x}} = e^{-(l+1) t} \). Also, \( e^{\theta x} - 1 = t - 1 \).
		\begin{align}
			I &= \int_0^\infty x^r \theta (e^{\theta x} - 1) e^{(l+1)\theta x} e^{-(l+1) e^{\theta x}} \, dx \nonumber \\
			&= \theta \int_{1}^{\infty} \left( \frac{\ln t}{\theta} \right)^r (t-1) t^{l+1} e^{-(l+1) t} \frac{dt}{\theta t} \nonumber \\
			&= \frac{1}{\theta^r} \int_1^\infty (\ln t)^r (t-1) t^{l} e^{-(l+1) t} \, dt \nonumber \\
			&= \frac{1}{\theta^r} \left[ \int_1^\infty (\ln t)^r t^{l+1} e^{-(l+1) t} \, dt - \int_1^\infty (\ln t)^r t^{l} e^{-(l+1) t} \, dt \right].
		\end{align}
		Define for integer \( r \ge 0 \):
		\[
		J(a,b) = \int_1^\infty (\ln t)^r t^{a-1} e^{-b t} \, dt.
		\]
		Then,
		\[
		I = \frac{1}{\theta^r} \left[ J(l+2, l+1) - J(l+1, l+1) \right].
		\]
		Now, using the identity \( (\ln t)^r = \left. \frac{d^r}{d\nu^r} t^\nu \right|_{\nu=0} \), we have:
		\[
		J(a,b) = \left. \frac{d^r}{d\nu^r} \int_1^\infty t^{\nu+a-1} e^{-b t} \, dt \right|_{\nu=0}.
		\]
		The inner integral evaluates to the upper incomplete Gamma function:
		\[
		\int_1^\infty t^{\nu+a-1} e^{-b t} \, dt = b^{-\nu-a} \Gamma(\nu+a, b),
		\]
		where \( \Gamma(s,b) = \int_b^\infty u^{s-1} e^{-u} \, du \).
		
		Hence,
		\[
		J(a,b) = \left. \frac{d^r}{d\nu^r} \left[ b^{-\nu-a} \Gamma(\nu+a, b) \right] \right|_{\nu=0}.
		\]
		Therefore, the final closed-form expression for the $r^{th}$ order moments is:
		\[
		E(X^r) = \sum_{i,j=0}^\infty \sum_{k=0}^i \sum_{l=0}^k \eta_{i,j,k} \binom{k}{l} (-1)^l e^{l+1} \frac{1}{\theta^r} \left[ J(l+2, l+1) - J(l+1, l+1) \right].
		\]
	\end{proof}
	
	\section{Appendix C: Moment Generating Function Calculation} \label{appendixC}
	\begin{proof}
		Using the series expansion of the PDF from Equation (\ref{exapnsionpdf}), the moment generating function is:
		\begin{align}
			M_X(t) &= E(e^{tX}) = \int_0^\infty e^{tx} f_X(x) dx \nonumber \\
			&= \sum_{i,j=0}^\infty \sum_{k=0}^i \eta_{i,j,k} \int_0^\infty e^{tx} \theta (e^{\theta x} - 1) e^{\fx} \left(1 - e^{\fx}\right)^k dx.
		\end{align}
		\begin{align}
			e^{\fx} \left(1 - e^{\fx}\right)^k &= \sum_{l=0}^k \binom{k}{l} (-1)^l e^{(l+1)\fx}.
		\end{align}
		Substituting \( \fx = \theta x - e^{\theta x} + 1 \):
		\begin{align}
			e^{(l+1) \fx} = e^{(l+1)(\theta x + 1)} e^{-(l+1) e^{\theta x}}.
		\end{align}
		Thus:
		\begin{align}
			M_X(t) &= \sum_{i,j=0}^\infty \sum_{k=0}^i \sum_{l=0}^k \eta_{i,j,k} \binom{k}{l} (-1)^l e^{l+1} \nonumber \\
			&\quad \times \int_0^\infty e^{tx} \theta (e^{\theta x} - 1) e^{(l+1)\theta x} e^{-(l+1) e^{\theta x}} dx.
		\end{align}
		Make the substitution \( u = e^{\theta x} \). Then \( x = \frac{\ln u}{\theta} \), \( dx = \frac{du}{\theta u} \), \( e^{tx} = u^{t/\theta} \), \( e^{(l+1)\theta x} = u^{l+1} \), and \( e^{\theta x} - 1 = u - 1 \). The integral becomes:
		\begin{align}
			&\int_0^\infty e^{tx} \theta (e^{\theta x} - 1) e^{(l+1)\theta x} e^{-(l+1) e^{\theta x}} dx \nonumber \\
			&= \theta \int_1^\infty u^{t/\theta} (u - 1) u^{l+1} e^{-(l+1)u} \frac{du}{\theta u} \nonumber \\
			&= \int_1^\infty u^{t/\theta + l} (u - 1) e^{-(l+1)u} du.
		\end{align}
		Separating into two integrals:
		\begin{align}
			\int_1^\infty u^{t/\theta + l} (u - 1) e^{-(l+1)u} du = \int_1^\infty u^{t/\theta + l + 1} e^{-(l+1)u} du - \int_1^\infty u^{t/\theta + l} e^{-(l+1)u} du.
		\end{align}
		The upper incomplete Gamma function: \( \Gamma(s, b) = \int_b^\infty z^{s-1} e^{-z} dz \). For \( b > 0 \),
		\begin{align}
			\int_1^\infty u^{a-1} e^{-b u} du = b^{-a} \Gamma(a, b).
		\end{align}
		Applying this with \( b = l+1 \):
		\begin{align}
			\int_1^\infty u^{t/\theta + l + 1} e^{-(l+1)u} du &= (l+1)^{-(t/\theta + l + 2)} \, \Gamma\!\left( \frac{t}{\theta} + l + 2, \, l+1 \right), \\
			\int_1^\infty u^{t/\theta + l} e^{-(l+1)u} du &= (l+1)^{-(t/\theta + l + 1)} \, \Gamma\!\left( \frac{t}{\theta} + l + 1, \, l+1 \right).
		\end{align}
		Therefore, the MGF is:
		\begin{align}
			M_X(t) &= \sum_{i,j=0}^\infty \sum_{k=0}^i \sum_{l=0}^k \eta_{i,j,k} \binom{k}{l} (-1)^l e^{l+1} \Bigg[ (l+1)^{-(\frac{t}{\theta} + l + 2)} \Gamma\!\left( \frac{t}{\theta} + l + 2, \, l+1 \right) \nonumber \\
			&\qquad - (l+1)^{-(\frac{t}{\theta} + l + 1)} \Gamma\!\left( \frac{t}{\theta} + l + 1, \, l+1 \right) \Bigg].
		\end{align}
	\end{proof}

	\section{Appendix D: Order statistics Calculation}\label{appendixD}
	The density function $f_{m:n}(x)$ of the $m^{th}$ order statistic, where $m = 1, \ldots, n$ derived from independent and identically distributed (i.i.d.) random variables $X_1, \ldots, X_n$ following KTD, can be expressed as:
	\begin{eqnarray*}
		f_{m:n}(x)=\frac{n!}{m!(n-m)!} \f(\x) \F^{m-1}(\x) \left(1-\F(x)\right)^{n-m}  
	\end{eqnarray*}
	\begin{align*}
		\F(\x)^{m-1} &= \left[ 1- (1-\F(\x))\right]^{m-1}\\
		&= \sum_{r=0}^{\infty} (-1)^r \binom{m-1}{r} (1-\F(\x))^r\\
		f_{m:n}(x)&=\frac{n!}{m!(n-m)!} \f(\x) \sum_{r=0}^{\infty} (-1)^r \binom{m-1}{r} (1-\F(\x))^{r+n-m}  \\
		f_{m:n}(x)&=\frac{n!}{m!(n-m)!} \a \b g(\x) G(\x)^{\a -1} \left( 1-G(\x)^{\a}\right)^{b-1}\sum_{r=0}^{\infty} (-1)^r \binom{m-1}{r}\left( 1-G(\x)^{\a}\right)^{b(r+n-m)}\\
		f_{m:n}(x)&=\frac{n!}{m!(n-m)!} \a \b g(\x) G(\x)^{\a -1} \sum_{r=0}^{\infty} (-1)^r \binom{m-1}{r} \left( 1-G(\x)^{\a}\right)^{b(r+n-m+1)-1}\\
		f_{m:n}(x)&= g(\x) \sum_{i,j,r=0}^{\infty}\sum_{k=0}^{i} \tau_{m:n} G(\x)^{k}\\
		f_{m:n}(x)&= \th (e^{\th \x} -1) e^{\fx} \sum_{i,j,r=0}^{\infty}\sum_{k=0}^{i} \tau_{m:n} \left(1-e^{\fx} \right)^{k}
	\end{align*}
	where $$\tau_{i,j,k,r}^{(m:n)} = \frac{n!}{(m-1)!(n-m)!} \a \b (-1)^{i+j+k+r}  \binom{m-1}{r} \binom{b(r+n-m+1)-1}{j} \binom{\a (j+1)-1}{i} \binom{i}{k} $$
	\bibliography{test}

\end{document}